\newcommand{\eg}{e.g.,\xspace}
\newcommand{\ie}{i.e.,\xspace}
\algnewcommand\algorithmicswitch{\textbf{switch}}
\algnewcommand\algorithmiccase{\textbf{case}}
\def\NoNumber#1{{\def\alglinenumber##1{} #1}\addtocounter{ALG@line}{-1}}
\newcommand{\StatexIndent}[1][3]{%
  \setlength\@tempdima{\algorithmicindent}%
  \Statex\hskip\dimexpr#1\@tempdima\relax}
\newcommand{\spec}{\ensuremath{\#}}
\newcommand{\atlr}{RB$\pm$ATL\xspace}
\newcommand\nat{\mathbb{N}}
\newcommand\integer{\mathbb{Z}}
\newcommand{\llangle}{\langle\!\langle}
\newcommand{\rrangle}{\rangle\!\rangle}
\newcommand\idle{idle\xspace}
\newcommand\AO[2]{\llangle #1 \rrangle \!\bigcirc\! #2}
\newcommand\AG[2]{\llangle #1 \rrangle \Box #2}
\newcommand\AU[3]{\llangle #1 \rrangle #2 \,{\mathcal U}\, #3}
\newcommand{\Agt}{Agt}
\newcommand{\Res}{Res}
\newcommand{\Act}{Act}
\newcommand{\untilstrategy}{until}
\newcommand{\hd}[1]{\ensuremath{\mathit{hd}(#1)}}
\newcommand{\tl}[1]{\ensuremath{\mathit{tl}(#1)}}
\mathchardef\mhyphen="2D
\newcommand{\nextstrategy}{next}
\newcommand{\onepath}[1][]{\ensuremath{\lambda\ifthenelse{\equal{#1}{}}{}{[#1]}}} 
\newcommand{\States}{\ensuremath{S}\xspace}
\newcommand{\coop}[2][]{\langle\!\langle{#2}\rangle\!\rangle_{_{\!\mathit{#1}}}}
\newcommand{\Props}{\Pi}
\newcommand{\Next}{\ensuremath{\!\bigcirc\!}}
\newcommand{\NUntil}{\ensuremath{\,\mathcal{U}}}
\newcommand{\NRel}{\ensuremath{\mathcal{R}}}
\newcommand{\rhooutcome}[3]{\ensuremath{\mathit{out}(#1,#2,#3)}}
\newcommand{\model}{\ensuremath{M}\xspace}
\newcommand{\enment}{\ensuremath{\eta}\xspace}
\newcommand{\production}{\mathsf{prod}}
\newcommand{\consumption}{\mathsf{cons}}
\newcommand{\Enments}{\ensuremath{\mathsf{En}}}
\newcommand{\modelsR}{\ensuremath{\models_R}\xspace}
\newcommand{\coopdown}[2][]{\ensuremath{\coop{#2}{}_{#1}^{\downarrow}}}
\renewcommand{\modelsR}{\models}
\newcommand\AR[3]{\llangle #1 \rrangle #2 \,{\mathcal R}\, #3}
\newcommand{\releasestrategy}{release}
\newcommand{\atlrd}{$\text{RB}\pm\text{ATL}^{\#}$\xspace}
\newcommand{\atlrir}{$\text{RB}\pm\text{ATL}_{iR}$\xspace}
\newcommand{\atlrdir}{$\text{RB}\pm\text{ATL}_{iR}^{\#}$\xspace}
\newcommand{\rald}{$\text{RAL}^{\#}$\xspace}
\begin{document}

\title{Resource Logics with a Diminishing Resource}  

\author{Natasha Alechina}
\orcid{0000-0003-3306-9891}
\affiliation{%
  \institution{University of Nottingham}
  \streetaddress{Jubilee Campus}
  \city{Nottingham} 
  \country{UK}
}
\email{nza@cs.nott.ac.uk}

\author{Brian Logan}
\orcid{0000-0003-0648-7107}
\affiliation{%
  \institution{University of Nottingham}
  \streetaddress{Jubilee Campus}
  \city{Nottingham}
  \country{UK}
}
\email{bsl@cs.nott.ac.uk}

\begin{abstract}  
Model-checking resource logics with production and consumption of resources is a computationally hard and often undecidable problem. We introduce a simple and realistic assumption that there is at least one \emph{diminishing resource}, that is, a resource that cannot be produced and every action has a non-zero cost on this resource. An example of such resource is time. We show that, with this assumption, problems that are undecidable even for the underlying Alternating Time Temporal Logic, such as model-checking under imperfect information and perfect recall, become decidable for resource logics with a diminishing resource. 
\end{abstract}

\maketitle

\section{Introduction}

There has been a considerable amount of work on multi-agent temporal logics 
interpreted over structures where agents' actions consume resources, or
both produce and consume resources. Examples include an extension of Coalition
Logic where actions consume resources and coalitional modalities are annotated
with resource bounds (`agents in coalition $A$ have a strategy of cost at most
$b$ to achieve $\phi$') (RBCL) \cite{Alechina//:09b,Alechina//:10d}, a similar extension
for Alternating Time Temporal Logic ATL (RB-ATL) \cite{Alechina//:10a}, 
extensions of Computation Tree Logic and Alternating Time Temporal Logic with 
both consumption and production of resources (RTL, RAL) 
\cite{BullingFarwer09rtl-clima-post,Bulling/Farwer:10a}, a variant of resource
bounded ATL where all resources are convertible to money and the amount of money
is bounded (PRB-ATL) \cite{DellaMonica//:11a,DellaMonica//:13a},
an extension of PRB-ATL to $\mu$-calculus \cite{DellaMonica/Lenzi:12a},
a version of ATL with more general numerical constraints (QATL${}^*$) 
\cite{Bulling/Goranko:13a}, a version of RB-ATL where unbounded
production of resources is allowed (\atlr) \cite{Alechina//:16d,Alechina//:16c}.
The model-checking problem for such resource logics is decidable, though often not comptationally tractable, when resources are only consumed or where the amount of resources is somehow bounded.
\cite{Bulling/Farwer:10a}. For RAL with unbounded production of resources,
the  model-checking problem is undecidable, and this holds even for several of its fragments \cite{Bulling/Farwer:10a}, although recently a fragment of RAL without the boundedness assumption has been found where the model-checking problem 
is decidable \cite{Alechina//:17b}. 
A slightly different semantics compared to RAL, but 
allowing unbounded production of resources, also
results in a decidable model-checking problem for resource extensions
of ATL such as \atlr \cite{Alechina//:16d}; the complexity of the model-checking problem for \atlr has been
shown to be 2EXPTIME-complete in \cite{Alechina//:16c}.

There exists also a large body of related work on reachability and non-termination problems in energy games and games on
vector addition systems with state 
\cite{Brazdil&Jancar&Kucera10,Jurdzinski&Lazic&Schmitz15}.
In fact, complexity and decidability results for resource logics in \cite{Alechina//:16c} build on 
the results for single-sided vector addition systems with states \cite{Courtois&Schmitz14,Abdullaetal13}. 

As far as we are aware, there is no work on model-checking resource logics under imperfect information.
For ATL (without resources) under imperfect information and with perfect recall uniform strategies
the problem is undecidable for three or more agents \cite{Dima/Tiplea:11a}. It is however decidable in the 
case of bounded strategies \cite{Vester:13a}.  For two player energy games with imperfect information and a fixed initial credit the existence of a winning strategy is also decidable \cite{Degorre//:10a}. 

In this paper we consider a special kind of models for resource logics satisfying a restriction
that one of the resources
is always consumed by each action. It is a very natural setting which occurs in many verification
problems for resource logics. The first obvious example of such a resource is time. Time is always
`consumed' by each action, and no agent in the system can turn back the clock and `produce' time. 
When a verification problem has time as one of the explicit resource parameters, the restriction certainly
applies. 
Other examples include 
systems where agents have a non-rechargeable battery 
and where all actions consume energy, e.g,. nodes in a wireless sensor network; and systems where agents have a store of propellant that cannot be replenished during the course of a  mission and all actions of interest involve manoeuvring, e.g., a constellation of satellites.
We call this special resource that is consumed by all actions a
\emph{diminishing resource}.

From the technical point of view, the restriction to systems with a diminishing resource has the advantage
that all strategies become bounded, even if for other resource types unbounded production is allowed. 
In the case of \atlr with a diminishing resource where the model-checking problem is already known to be
decidable and 2EXPTIME-complete, we can produce simpler model-checking algorithms and a lower complexity
bound (PSPACE if resource bounds are written in unary).
In the case of \atlr with a 
diminishing resource under imperfect information, the result of \cite{Vester:13a} does not apply 
immediatelly because the bound is not fixed in advance, but the logic is indeed decidable and we get a new 
set of model-checking algorithms and a complexity bound.
Finally, the decidability of RAL with a diminishing resource follows from the result on the decidability of RAL
on bounded models \cite{Bulling/Farwer:10a}, but the model-checking algorithms and the PSPACE upper bound (for resource endowments written in unary) are specific to RAL with diminishing resource and are new.

The rest of the paper is organised as follows. In Section \ref{sec:atlr-spec}, we introduce \atlrd
with a diminishing resource, motivate changes to its syntax (we use the Release operator instead of `Always'
or `Globally', and do not allow infinite resource bounds), give a
model-checking algorithm and analyse its complexity. In Section
\ref{sec:atlrdir} we introduce \atlrdir,
which is \atlrd under imperfect information and perfect recall, and give
a model-checking algorithm for it and analyse its complexity. Finally in
Section \ref{sec:rald} we define RAL with diminishing resource, give a model-checking algorithm for it and show that the complexity is the same as for \atlrd.

\section{\atlrd}
\label{sec:atlr-spec}

The logic \atlr was introduced in \cite{Alechina//:14c}, and its model-checking complexity studied in more detail in \cite{Alechina//:16d} and \cite{Alechina//:16c}. Here we consider a variant of this logic without the \idle action which is interpreted on finite paths. It contains a Release operator instead of Globally and does not allow infinite values in resource bounds. We use Release because it is not definable in ATL in terms
of Next, Until and Globally \cite{Laroussinie//:08a} while Globally is definable in terms of Release, and it has a more intuitive meaning on finite computations.

As is the case with \atlr, the syntax of \atlrd  is defined relative to the following sets:

$\Agt =\{a_1, \ldots, a_n\}$ is a set of $n$ agents,
    $\Res=\{res_1, \ldots,$ $res_r\}$ is a set of $r$ resource types,
    $\Pi$ is a set of propositions, and ${\mathcal B} = 
\nat^{{\Res}^{\Agt}}$ is a set of resource bounds 
(resource allocations to agents). Elements of ${\mathcal B}$
are vectors of length $n$ where each element is a vector of length $r$ (the $k$th
element of the $i$th vector is the allocation of the $k$th resource to the $i$th agent). We will denote by 
${\mathcal B}_A$ (for $A \subseteq \Agt$) the set of possible resource allocations 
to agents in $A$.

Formulas of \atlrd are defined by the following syntax
\[
\phi, \psi ::= p \mid
            \neg \phi \mid
            \phi \lor \psi \mid
            \AO{A^b}{\phi} \mid
            \AU{A^b}{\phi}{\psi} \mid
            \AR{A^b}{\phi}{\psi}
\]
where $p \in \Pi$ is a proposition, $A \subseteq \Agt$, and $b \in {\mathcal B}_A$ 
is a resource bound.
Here, $\AO{A^b}{\phi}$ means that a coalition $A$ can ensure that the next state satisfies $\phi$
under resource bound $b$.
$\AU{A^b}{\phi}{\psi}$ means that $A$ has a strategy to enforce $\psi$ while maintaining the truth
of $\phi$, and the cost of this strategy is at most $b$.
Finally, $\AR{A^b}{\phi}{\psi}$ means that $A$ has a strategy to maintain $\psi$ until and including the time when $\phi$ becomes true,
or to maintain $\psi$ forever if $\phi$ never becomes true, and the cost of this strategy is at most $b$.

The language is interpreted on resource-bounded concurrent game structures. Without loss of generality, we assume that the first resource type is diminishing, \ie is consumed by every action.

\begin{definition}
\label{def:rbcgs}
A resource-bounded concurrent game structure with diminishing resource (RB-CGS$^\spec$) is a tuple
$M = (\Agt$, $\Res$, $S, \Pi, \pi$, $\Act$, $d, c, \delta)$ where:
\begin{itemize}
\item $\Agt$ is a non-empty finite set of $n$ agents, 

\item $\Res$ is a non-empty finite set
  of $r$ resource types, where the first one is the distinguished diminishing resource 

\item $S$ is a non-empty finite set of states;
\item $\Pi$ is a finite set of propositional variables and $\pi : \Pi
  \to \wp(S)$ is a truth assignment which associates each proposition
  in $\Pi$ with a subset of states where it is true;
\item $\Act$ is a non-empty set of actions 

\item $d : S \times \Agt \to \wp(\Act)\setminus \{\emptyset\}$ is a function
  which assigns to each $s \in S$ a non-empty set of actions available
  to each agent $a \in \Agt$. 
  We denote joint actions by all agents in $\Agt$
  available at $s$ by $D(s) = d(s,a_1) \times \cdots \times d(s,a_n)$;
\item $c : S \times \Act \to \integer^r$ is a partial function which
maps a state $s$ and an action $\sigma$ to a vector of
integers, where the integer in position $i$ indicates consumption or production
of resource $r_i$ by the action (negative value for consumption and positive
value for production). We stipulate that the first position in the vector is 
always at most $-1$ (at least one unit of the diminishing resource is consumed by every action).
\item $\delta : S \times \Act^{|\Agt|} \to S$ is a partial function that maps
every $s \in S$ and joint action $\sigma \in D(s)$ to a
state resulting from executing $\sigma$ in $s$.
\end{itemize}
\end{definition}

In what follows, we use the usual point-wise notation for vector comparison and
addition. In particular, $(b_1,\ldots,b_r) \leq (d_1,\ldots,d_r)$ iff $b_i \leq
d_i$ $\forall$ $i \in \{1,\ldots,r\}$,
$(b_1,\ldots,b_r) = (d_1,\ldots,$ $d_r)$ iff $b_i = d_i$ $\forall$ $i \in \{1,\ldots,r\}$, and $(b_1,\ldots,b_r) + (d_1,\ldots,d_r)
= (b_1 + d_1,\ldots,b_r + d_r)$.  We define
$(b_1,\ldots,b_r) < (d_1,\ldots,d_r)$ as $(b_1,\ldots,b_r) \leq (d_1,\ldots,d_r)$ and $(b_1,\ldots,b_r) \not = (d_1,\ldots,d_r)$.
Given a function $f$
returning a vector, we denote by $f_i$ the function that returns the i-th 
component of the vector returned by $f$.

We denote by $\production(s,\sigma)$ the vector obtained by
replacing negative values in $c(s,\sigma)$ by $0$s: it is the vector of
resources produced by action $\sigma$.
We denote by $\consumption(s,\sigma)$ the vector obtained by first replacing
positive values in $c(s,\sigma)$ by $0$s and then replacing negative values by 
their absolute values: $\consumption(s,\sigma) = (|\min(0, c_1(s,\sigma))|$,
\ldots, $|\min(0, c_r(s,\sigma))|)$. It returns the positive costs on each resource of executing
$\sigma$. In particular, $\consumption_1(s,\sigma) \geq 1$.  

We denote the set of all finite non-empty sequences of states (finite computations) 
in a RB-CGS$^{\spec}$ $M$ by $S^+$. We consider only finite computations 
because we
are interested in computations possible under a finite resource bound, and
in the presence of a diminishing resource which is required for any action,
such computations are always finite. For a computation 
$\lambda = s_1\ldots s_k \in S^+$, we use the notation $\lambda[i] = s_i$
for $i \leq k$, $\lambda[i,j] = s_i \ldots s_j$ $\forall$ $1 \leq i \leq j \leq k$, and $|\lambda| = k$ for the length of $\lambda$.

Given a RB-CGS$^\spec$ $M$ and a state $s \in S$, a \emph{joint action by a
coalition} $A \subseteq \Agt$ is a tuple $\sigma = (\sigma_a)_{a \in
  A}$ (where $\sigma_a$ is the action that agent $a$ executes as part of $\sigma$, the $a$th component of $\sigma$) such that $\sigma_a \in d(s,a)$.
For a joint action $\sigma$ by a coalition $A$, we denote by $\consumption(s,\sigma) = (\consumption(s,\sigma_a))_{a \in A}$ the vector of costs of the joint action, similarly for $\production(s,sigma)$.
The set of all joint actions for $A$ at state $s$ is denoted by
$D_A(s)$.

Given a joint action by $\Agt$ $\sigma \in D(s)$, 
$\sigma_A$ (a projection of $\sigma$ on $A$)
denotes the joint action executed by $A$ as part of $\sigma$: $\sigma_A = (\sigma_a)_{a \in A}$. The set of all possible outcomes
of a joint action $\sigma \in D_A(s)$ at state $s$ is:
\[
out(s,\sigma) = \{ s' \in S \mid \exists \sigma' \in D(s): \sigma= \sigma'_A \land
                                           s' = \delta(s,\sigma') \}
                                           \]

A \emph{strategy for a coalition} $A \subseteq \Agt$ in a RB-CGS$^\spec$
 $M$ is a mapping
$F_A : S^+ \to \Act^{|A|}$ such that, for every $\lambda \in S^+$, 
$F_A(\lambda)
\in D_A(\lambda[|\lambda|])$. A computation $\lambda$ is consistent with a strategy $F_A$ iff,
for all $i$, $1 \leq i < |\lambda|$, $\lambda[i+1] \in out(\lambda[i],F_A(\lambda[1,i]))$. We
denote by $out(s,F_A)$ the set of all computations $\lambda$ starting from $s$ that are consistent with $F_A$.

Given a bound $b \in {\mathcal B}_A$, a computation $\lambda \in out(s,F_A)$ is $b$-consistent
with $F_A$ iff, for every $i$, $1 \leq i < |\lambda|$,
\[\consumption(\lambda[i], F_A(\lambda[1,i])) \leq e_A(\lambda[i])\]
where $e_A(\lambda[i])$ is the amount of resources agents in $A$ have in $\lambda[i]$:
$e_A(\lambda[1]) = b$ and 
\begin{align*}
e_A(\lambda[i+1]) =&\  e_A(\lambda[i]) - \consumption(\lambda[i], F_A(\lambda[1,i])) + \\
& \ \production(\lambda[i], F_A(\lambda[1,i])).
\end{align*}
In other words, the amount of resources any of the agents have is never 
negative for any resource type.

A computation $\lambda$ is $b$-maximal for a strategy $F_A$ if
it cannot be extended further while remaining $b$-consistent (the next action prescribed by
$F_A$ would violate $b$-consistency).

The set of all maximal computations starting from state $s$ that are $b$-consistent with $F_A$ 
is denoted by $out(s,F_A,b)$. Note that this set is finite, the maximal length of each computation
is bounded by $b$ (or rather by the 
minimal value for any agent in $A$ of
${b_a}_1$: the bound on the first resource). 

Given a RB-CGS$^\spec$ $M$ and a state $s$ of $M$, the truth of an \atlrd formula $\phi$
with respect to $M$ and $s$ is defined inductively on the structure of $\phi$
as follows:
\begin{itemize}
 \item $M, s \models p$ iff $s \in \pi(p)$;

 \item $M, s \models \neg \phi$ iff $M, s \not\models \phi$;

 \item $M, s \models \phi \lor \psi$ iff $M, s \models \phi$ or $M, s \models
 \psi$;

\item $M, s \models \AO{A^b}{\phi}$ iff $\exists$ strategy $F_A$ such that
      for all $b$-maximal $\lambda \in out(s, F_A, b)$, $|\lambda| \geq 2$ and $M, \lambda[2] \models \phi$;

\item $M, s \models \AU{A^b}{\phi}{\psi}$ iff $\exists$ strategy $F_A$ such
      that for all $b$-maximal $\lambda \in out(s, F_A, b)$, $\exists i$ such that $1 \leq i \leq |\lambda|$, $M, \lambda[i]
      \models \psi$ and $M, \lambda[j] \models \phi$ for all $j \in
      \{1,\ldots,i-1\}$.

\item $M, s \models \AR{A^b}{\phi}{\psi}$ iff $\exists$ strategy $F_A$ such
      that for all $b$-maximal $\lambda \in out(s, F_A,b)$, either $\exists i$ such that $1 \leq i \leq |\lambda|$: 
$M, \lambda[i] \models \phi$ and $M, \lambda[j] \models \psi$ for all $j \in
      \{1,\ldots,i\}$; or, $M, \lambda[j] \models \psi$ for all $j$ such that $1 \leq j \leq |\lambda|$.
\end{itemize}

The most straightforward way of model-checking \atlrd is to adapt the model-checking algorithm
for \atlr \cite{Alechina//:16d} and add a clause for $\AR{A^b}{\phi}{\psi}$. We present this simple algorithm here
because we will use it in modified form in subsequent sections. It is however possible to do \atlrd model-checking more efficiently in the spirit of \cite{DellaMonica//:11a}.


The algorithm is shown in Algorithm \ref{alg:atlrd-label}.
Given a formula, $\phi_0$, we produce a set of subformulas $Sub(\phi_0)$ of $\phi_0$ 
in the usual way. $Sub(\phi_0)$ is ordered in increasing order of complexity. 
We then proceed by cases. For all formulas in $Sub(\phi_0)$ apart from $\AO{A^b}\phi$, $\AU{A^b}{\phi}{\psi}$ and $\AG{A^b}{\phi}$ we essentially run the standard ATL model-checking algorithm \cite{Alur//:02a}. 
Labelling states with $\AO{A^b}\phi$ makes use of a function 
$Pre(A,\rho,b)$ which, given a coalition $A$, a set $\rho \subseteq S$
and a bound $b$, returns a set of states $s$ in which $A$ has a joint action $\sigma_A$ with $\consumption(s,\sigma_A) \leq b$ such that $out(s,\sigma_A) \subseteq \rho$.
Labelling states with $\AU{A^b}{\phi}{\psi}$ and $\AR{A^b}{\phi}{\psi}$ is
more complex, and in the interests of readability we provide separate
functions: \textsc{\untilstrategy} for $\AU{A^b}{\phi}{\psi}$ formulas
is shown in Algorithm \ref{alg:until-strategy}, and
\textsc{\releasestrategy} for $\AR{A^b}{\phi}{\psi}$ formulas is shown in
Algorithm \ref{alg:release-strategy}.

Both algorithms proceed  by depth-first and-or search of $M$. We record information about the state of the search in a search tree of nodes. A \emph{node} is a structure that consists of a state of $M$, the resources available to the agents $A$ in that state (if any), and a finite path (sequence of of nodes and edges) leading to this node from the root node. Edges in the tree correspond to joint actions by all agents. Note that the resources available to the agents in a state $s$ on a path constrain the edges from the corresponding node to be those actions $\sigma_A$ where $\consumption(s,\sigma_A)$ is less than or equal to the available resources.
For each node $n$ in the tree, we have a function $s(n)$ that returns its
state, $p(n)$ that returns the nodes on the path, $\mathit{act}(n)$ that returns the joint action taken to reach $s(n)$ from the preceding state on the path (\ie the edge to $n$), and $e(n)$ that returns the vector of resource availabilities in $s(n)$ for $A$ as a result of following
$p(n)$. The functions $\mathit{act}_{a}(n)$ and $e_{a}(n)$ return the action performed by agent $a \in A$ in $\mathit{act}(n)$ and the resources available to agent $a$ in $e(n)$ respectively.
We use $p(n)[i]$ to denote the $i$-th node in the path $p(n)$, and $p(n)[1, j]$ to denote the prefix of $p(n)$ up to the $j$-th node.
The function $\mathit{node}_0(s,b)$ returns the root node, i.e., a node
$n_0$ such that $s(n_0) = s$, $p(n_0) = [\ ]$, $\mathit{act}(n_{0}) = nil$, 
and $e(n_0) = b$. The function $\mathit{node}(n, \sigma, s')$ returns a node $n'$
where $s(n') = s'$, $p(n') = [p(n) \cdot n]$, $\mathit{act}(n') = \sigma$, 
and for all agents $a \in A$ $e_a (n') = e_a (n) +
\production(s(n),\sigma_a) - \consumption(s(n),\sigma_a)$.

\begin{algorithm}
\caption{Labelling $\phi_0$ }
\label{alg:atlrd-label}
\begin{algorithmic}[1]
\Function{\atlrd-label}{$M, \phi_0$}
\For{$\phi' \in Sub(\phi_0)$}
\Case{$\phi' = p,\ \neg \phi,\ \phi \vee \psi$}\ standard, see \cite{Alur//:02a}
\EndCase
\Case{$\phi' = \AO{A^b}{\phi}$}
\State $[\phi']_M \gets Pre(A, [\phi]_M,b)$
\EndCase
\Case{$\phi' = \AU{A^b}{\phi}{\psi}$}
\State $[\phi']_M \gets \{\ s \mid s \in S \wedge$
\State $\quad \Call{until-strategy}{node_0(s,b), \AU{A^b}{\phi}{\psi} } \}$
\EndCase
\Case{$\phi' = \AR{A^b}{\phi}{\psi}$}
\State $[\phi']_M \gets \{\ s \mid s \in S \wedge$
\State $\quad \Call{release-strategy}{node_0(s,b), \AR{A^b}{\phi}{\psi} } \}$
\EndCase
\EndFor
\State $\mathbf{return\ } [\phi_0]_M$
\EndFunction
\end{algorithmic}
\end{algorithm}

\begin{algorithm}[h]
\caption{Labelling $\AU{A^b}{\phi}{\psi}$ }
\label{alg:until-strategy}

\begin{algorithmic}[1] 
\Function{until-strategy}{$n, \AU{A^b}{\phi}{\psi} $}
\If{$s(n) \in [\psi]_M $}
\State $\mathbf{return}\ \mathit{true}$
\EndIf
\If{$s(n) \not \in [\phi]_M $}
\State $\mathbf{return}\ \mathit{false}$
\EndIf
\State $ActA \gets \{ \sigma \in D_A(s(n)) \mid \consumption(s(n),\sigma) \leq e(n) \}$
\For{$\sigma \in ActA $}
\State $O \gets out(s(n),\sigma)$
\State $\mathit{strat} \gets \mathit{true}$
\For{$s' \in O$}
\State $\mathit{strat} \gets \mathit{strat} \wedge $
\State $\quad \Call{until-strategy}{node(n,\sigma,s'), \AU{A^b}{\phi}{\psi} }$
\EndFor
\If{$ \mathit{strat}$}
\State $\mathbf{return}\ \mathit{true}$
\EndIf
\EndFor
\State $\mathbf{return}\ \mathit{false}$
\EndFunction
\end{algorithmic}
\end{algorithm}

\begin{algorithm}[h]
\caption{Labelling $\AR{A^b}{\phi}{\psi}$ }
\label{alg:release-strategy}

\begin{algorithmic}[1] 
\Function{release-strategy}{$n, \AR{A^b}{\phi}{\psi} $}
\If{$s(n) \in [\psi]_M \cap [\phi]_M$} 
\State $\mathbf{return}\ \mathit{true}$
\EndIf
\If{$s(n) \in [\psi]_M \ \wedge\ $ \NoNumber{\\ $\qquad\ \ \exists \sigma \in D_A(s(n)) : \consumption(s(n),\sigma) \not\leq e(n)$}}
\State $\mathbf{return}\ \mathit{true}$
\EndIf
\If{$s(n) \not \in [\psi]_M $}
\State $\mathbf{return}\ \mathit{false}$
\EndIf
\State $ActA \gets \{ \sigma \in D_A(s(n)) \mid \consumption(s(n),\sigma) \leq e(n) \}$
\For{$\sigma \in ActA $}
\State $O \gets out(s(n),\sigma)$
\State $\mathit{strat} \gets \mathit{true}$
\For{$s' \in O$}
\State $\mathit{strat} \gets \mathit{strat} \wedge $
\State $\quad \Call{release-strategy}{node(n,\sigma,s'), \AR{A^b}{\phi}{\psi} }$
\EndFor
\If{$ \mathit{strat}$}
\State $\mathbf{return}\ \mathit{true}$
\EndIf
\EndFor
\State $\mathbf{return}\ \mathit{false}$
\EndFunction
\end{algorithmic}
\end{algorithm}

When checking whether $\AU{A^b}{\psi_1}{\psi_2}$ or
$\AR{A^b}{\psi_1}{\psi_2}$ is true in a state $s$, we examine paths whose length 
is bounded by the smallest resource bound ${b_a}_1$ on the first resource
in $b$ (since every action costs at least 1 unit of the first resource, any
computation can contain at most ${b_a}_1 $ steps).
An over-approximation of the size of this search tree is $S^{min_{a\in A}({b_a}_1)}$. 
  
\begin{lemma} \label{lem:terminates}
Algorithm \ref{alg:atlrd-label} on input $M$, $\phi$ terminates after at most $O(|\phi|\times |M|^{k})$
steps where $k$ is the maximal value of the first resource bound in $\phi$. 
\end{lemma}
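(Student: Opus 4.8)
The plan is to bound the total running time as the product of two quantities: the number of iterations of the outer loop of Algorithm~\ref{alg:atlrd-label} and the worst-case cost of processing a single subformula. The outer loop runs once for each $\phi' \in Sub(\phi_0)$, and $|Sub(\phi_0)| = O(|\phi|)$, so it suffices to show that every case contributes at most $O(|M|^{k})$ steps. The Boolean and atomic cases and the $\AO{A^b}{\phi}$ case reduce to standard ATL labelling and a single pass computing $Pre(A,[\phi]_M,b)$, both polynomial in $|M|$ and hence dominated by the bound to be proved. The only potentially expensive cases are $\AU{A^b}{\phi}{\psi}$ and $\AR{A^b}{\phi}{\psi}$, which invoke \textsc{until-strategy} and \textsc{release-strategy} once from each root node $node_0(s,b)$, $s \in S$, so the whole argument reduces to bounding the size of the and--or search tree explored by one such call.

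The crux is a depth bound on this search tree, and this is exactly where the diminishing-resource assumption enters. I would first establish the invariant that along any branch the first component of the available resource strictly decreases. By Definition~\ref{def:rbcgs} we have $c_1(s,\sigma) \leq -1$ for every action, so the first resource is never produced; hence $\production_1(s,\sigma) = 0$ and $\consumption_1(s,\sigma) \geq 1$. Applying the update rule $e_a(n') = e_a(n) + \production(s(n),\sigma_a) - \consumption(s(n),\sigma_a)$ performed by $node(n,\sigma,s')$, the first component satisfies ${e_a}_1(n') \leq {e_a}_1(n) - 1$ for every agent $a \in A$. Since only actions $\sigma$ with $\consumption(s(n),\sigma) \leq e(n)$ are placed in $ActA$, and each such action requires at least one unit of the first resource from every agent, as soon as some agent's first-resource budget reaches $0$ we have $ActA = \emptyset$, the inner loop is skipped, and the call returns without recursing. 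Starting from the root value ${b_a}_1 \leq k$, the agent with the smallest first-resource bound exhausts its budget after at most $\min_{a \in A}({b_a}_1) \leq k$ steps, so the recursion depth is at most $k$.

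With the depth controlled, I would bound the tree size by a routine branching estimate: at each node the algorithm branches over the joint actions in $ActA$ (at most $|D_A(s(n))|$ of them) and, for each, over the outcomes in $out(s(n),\sigma)$ (at most $|S|$ of them), while processing each node in time polynomial in $|M|$ (the membership tests in $[\phi]_M$ and $[\psi]_M$, the construction of $ActA$, and the child nodes). A tree of depth at most $k$ whose per-level branching is polynomial in $|M|$ therefore has $|M|^{O(k)}$ nodes; following the over-approximation $|S|^{\min_{a \in A}({b_a}_1)}$ noted before the lemma and absorbing the fixed per-node and per-root polynomial overheads, the cost of each $\AU{A^b}{\phi}{\psi}$ or $\AR{A^b}{\phi}{\psi}$ case is $O(|M|^{k})$. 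Multiplying by the $O(|\phi|)$ iterations of the outer loop gives the claimed $O(|\phi| \times |M|^{k})$ bound.

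The hard part is genuinely the strict-decrease invariant and its consequence that $ActA$ eventually empties, since this is precisely what rules out the infinite search trees arising for general RAL and guarantees termination at all; everything else is bookkeeping. The remaining care is twofold: keeping the polynomial per-node and per-root ($|S|$ many) factors from inflating the exponent beyond $k$, and handling the per-agent structure of the budget, where the depth is governed by the minimum first-resource bound over the coalition and is bounded above by the global maximum $k$ appearing in $\phi$.
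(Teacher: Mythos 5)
Your proof is correct and follows essentially the same argument the paper gives (in the paragraph immediately preceding the lemma, which is all the justification the paper offers): the diminishing first resource bounds the depth of the and--or search tree by $\min_{a\in A}({b_a}_1)\leq k$, the per-subformula search therefore explores at most roughly $|M|^{k}$ nodes, and this is repeated for each of the $O(|\phi|)$ subformulas. Your explicit strict-decrease invariant and the observation that $ActA$ empties once some agent's first-resource budget reaches zero are precisely the paper's remark that every action costs at least one unit of the first resource, so any computation has at most ${b_a}_1$ steps.
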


\begin{lemma} \label{lem:correct}
Algorithm \ref{alg:atlrd-label} is correct.
\end{lemma}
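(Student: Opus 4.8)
The plan is to prove correctness by induction on the complexity of the subformulas of $\phi_0$, following the order in which Algorithm~\ref{alg:atlrd-label} processes them, and to show that after the iteration handling a subformula $\phi'$ we have $s \in [\phi']_M$ iff $M, s \models \phi'$ for every $s \in S$. The Boolean cases ($p$, $\neg\phi$, $\phi \vee \psi$) and the $\AO{A^b}{\phi}$ case reduce to the correctness of the standard ATL labelling and of $Pre$, which tests exactly the one-step condition in the semantics of $\bigcirc$; these are routine given the induction hypothesis that $[\phi]_M$ and $[\psi]_M$ already label strictly simpler subformulas correctly. The substance of the proof is the correctness of the functions \textsc{until-strategy} and \textsc{release-strategy}.

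For these two cases I will prove a strengthened invariant about arbitrary recursion nodes, not only roots. For \textsc{until-strategy} the invariant is: for every node $n$, the call \textsc{until-strategy}$(n, \AU{A^b}{\phi}{\psi})$ returns true iff $A$ has a strategy that enforces $\phi\,\mathcal{U}\,\psi$ along all $b$-maximal continuations starting from $s(n)$ with available resources $e(n)$; the invariant for \textsc{release-strategy} replaces the until condition by the disjunction in the semantics of $\mathcal{R}$. The key observation making this work is that the residual satisfaction problem depends only on the pair $(s(n), e(n))$: the path $p(n)$ serves purely to reconstruct a strategy and to bookkeep resources, and by construction $e(n)$ coincides with $e_A$ from the semantics, while the guard $\consumption(s(n),\sigma) \leq e(n)$ implements $b$-consistency exactly. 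Because the first resource is diminishing, the recursion is well-founded --- this is precisely the termination argument of Lemma~\ref{lem:terminates} --- so both directions of the invariant can be established by induction on recursion depth.

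For soundness (return true implies the formula holds) I will extract a strategy from a successful search tree by selecting, at each node that returned true through the action loop, the joint action $\sigma$ for which all outcomes recursed to true; at nodes that returned true through the $\psi$-base case (for until) or through the ``$\psi$ holds and some action is unaffordable'' case (for release), the path is allowed to terminate, in the release case by prescribing the unaffordable action so that the resulting $b$-maximal path witnesses the ``$\psi$ forever'' disjunct. I then verify that every $b$-maximal play consistent with this strategy satisfies the required temporal condition, using that the recursion descends only through $\phi$-states (until) or $\psi$-states (release), thereby maintaining the ``$\phi$ so far'' / ``$\psi$ so far'' prefix condition. For completeness (formula holds implies return true) I take a witnessing strategy $F_A$ and argue top-down that the search succeeds at every node reached along a $b$-maximal play consistent with $F_A$: at each such node the action $F_A$ prescribes is among those the algorithm tries, or, if it is unaffordable, it makes the play maximal and triggers the matching base case, and all of its outcomes recurse to true by the downward induction.

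I expect the main obstacle to be the completeness argument for release, namely matching the strategy's freedom to end a play against the algorithm's two separate acceptance conditions. A witnessing strategy may terminate a play early by prescribing an unaffordable action while $\psi$ still holds; I must show this is faithfully captured by the test ``$\exists \sigma : \consumption(s(n),\sigma) \not\leq e(n)$'', and conversely that this test is sound, i.e.\ whenever it fires the extracted strategy genuinely produces a $b$-maximal play on which $\psi$ has held throughout, so that the ``$\psi$ forever'' disjunct is met. A secondary subtlety is that the semantic strategy is history-dependent whereas the search is effectively over the resource-augmented state space $(s,e)$; I will neutralise this by only ever invoking $F_A$ at the concrete history labelling each tree node, so that history-dependence is harmless for both directions.
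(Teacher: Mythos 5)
Your proposal is correct and follows essentially the same route as the paper's own proof: both establish correctness of the and-or search by extracting a strategy from a successful search (in particular treating the Release test ``$\psi$ holds and some action is unaffordable'' as prescribing that action, so the resulting computation is $b$-maximal and satisfies $\psi$ throughout) and, conversely, by tracing a witnessing strategy through the search. The paper's argument is simply much terser; your strengthened invariant over the pairs $(s(n),e(n))$ and the explicit induction on recursion depth make rigorous what the paper leaves informal.
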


\begin{proof}
The Boolean cases of the algorithm are standard.

The algorithm for $\AO{A^b}{\phi}$ returns all states from where there is an action by $A$ that costs
less than $b$ and all outcomes of this action satisfy $\phi$. Essentially in each such state
there is a one-step strategy satisfying $\AO{A^b}{\phi}$. This is all we need because the rest of
actions on this strategy can be arbitrary; the computations that are produced by the strategy do not
need to satisfy any additional constraints apart from being maximal, \ie eventually running out of 
resources (which they are guaranteed to do because of the first resource). 

The algorithm for $\AU{A^b}{\phi}{\psi}$
performs forward and-or search while making sure $\phi$ remains true, until $\psi$ is reached.
It returns true if and only if it finds a strategy where each computation reaches a $\psi$ state 
before $A$ run out of resources to carry on with the strategy, and $\phi$ holds along the computation
up to the point $\psi$ becomes true. Again actions after the $\psi$ state can be arbitrary.

The algorithm for $\AR{A^b}{\phi}{\psi}$ is similar to $\AU{A^b}{\phi}{\psi}$ apart from two points.
One is that the $\psi$ state should also satisfy $\phi$ (the invariant holds not just on the path
to a $\phi$ state but in the $\phi$ state itself). This is ensured by the test at line 2. The second difference
is that there is another way to make $\AR{A^b}{\phi}{\psi}$ true, which is to run out of resources while
maintaining $\psi$. This is the reason for the test at line 4: if the invariant
$\psi$ is true in $s$ and there is an action $\sigma$ in $D_A(s)$ that would cause $A$ to run out of 
resources, we return true because for this computation $\lambda$, the strategy $F_A$ such that
$F_A(\lambda)=\sigma$ ensures that $\lambda$ is a $b$-maximal computation (and it satisfies $\psi$ 
everywhere).
\end{proof}

\begin{theorem}
The model-checking problem for \atlrd is decidable in PSPACE (if resource bounds are written
in unary).
\end{theorem}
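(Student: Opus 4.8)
The plan is to show that Algorithm \ref{alg:atlrd-label}, which is correct by Lemma \ref{lem:correct}, can be \emph{implemented in polynomial space}, rather than to bound its running time (which by Lemma \ref{lem:terminates} is only exponential). The key observation is that the and-or searches performed by \textsc{until-strategy} and \textsc{release-strategy} are depth-first, so at any moment we need keep in memory only the single root-to-leaf branch currently under exploration, together with, at each node on that branch, the action currently being tried, the index of the outcome currently being checked, and the accumulated truth value. Once a subtree has been evaluated to a single bit, its space is reclaimed.

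First I would bound the depth of these searches. Since the first resource is diminishing, every action consumes at least one unit of it (Definition \ref{def:rbcgs}, where $\consumption_1(s,\sigma) \geq 1$), so along any $b$-consistent computation the first component of the available resources strictly decreases by at least $1$ at each step. Hence every branch of the search has length at most $k$, the maximal value occurring in the first resource bound of $\phi$. When resource bounds are written in unary, $k$ is bounded by the size of the input, so the recursion depth is polynomial.

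Next I would bound the space used at each node. A node stores its state (in $O(\log |S|)$ bits), the current action $\sigma$, an outcome index, and the resource vector $e(n)$ for the coalition $A$. The only potential blow-up is in $e(n)$: although non-diminishing resources may be produced, along a branch of length at most $k$ each component can increase by at most $k$ times the largest value appearing in $c$, so every entry of $e(n)$ stays bounded by a polynomial in the (unary) input size and is stored in polynomially many bits. Thus each stack frame has polynomial size, and the whole depth-first stack occupies polynomial space.

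Finally I would assemble the bound for the outer procedure. The labelling sets $[\psi]_M$ for the subformulas $\psi \in Sub(\phi_0)$ are subsets of $S$, so all of them together occupy $O(|\phi_0| \cdot |S|)$ space, and these are the only values that must persist as the outer loop moves to more complex subformulas; for each temporal subformula and each start state $s$ we reuse the same polynomial-size workspace to run the search, whose output is a single bit. As the search is deterministic, this yields a deterministic polynomial-space procedure, so model-checking \atlrd is in PSPACE. I expect the main obstacle to be the argument that the stored resource vectors remain polynomially bounded under production: this is precisely where the combination of the diminishing-resource assumption (which bounds the path length, hence both the recursion depth and the number of production steps) and the unary encoding (which bounds the numeric magnitudes) is essential, since under a binary encoding the recursion depth $k$ alone would already be exponential.
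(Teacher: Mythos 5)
Your proposal is correct and takes essentially the same approach as the paper: the paper's own proof likewise starts from Lemmas \ref{lem:terminates} and \ref{lem:correct} and then observes that the depth-first search can be arranged on a stack holding only one branch at a time, whose depth is bounded by the first-resource bound and hence polynomial when bounds are written in unary. Your write-up simply supplies details the paper leaves implicit (per-frame size, polynomial bounds on the stored resource vectors under production, and storage of the labelling sets), but the route is the same.
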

\begin{proof}
From Lemmas \ref{lem:terminates} and \ref{lem:correct} we have a model checking algorithm that solves the model checking problem for \atlrd. The complexity which results from the time bound in 
Lemma \ref{lem:terminates} can be improved by observing that the depth first search can be arranged
using a stack and we only need to keep one branch at a time on the stack. The size of the stack is bounded 
by $min_{a \in A}({b_a}_1)$ and hence is polynomial if $b$ is represented in unary.
\end{proof}

\section{\atlrd with imperfect information and perfect recall}
\label{sec:atlrdir}

Agents often have to act under imperfect information, for example, if states are only partially 
observable, an agent may be uncertain whether it is in state $s$ or $s'$. This is represented
in imperfect information models as a binary indistinguishability relation on the set of states for each agent $a$, $\sim_a$: if
$a$ cannot distinguish $s$ from $s'$, we have $s \sim_a s'$. This relation can easily be lifted
to finite sequences of states: if $s_1 \sim_a s'_1$, $s_2 \sim_a s'_2$, then $s_1s_2 \sim_a s'_1 s'_2$.
An essential requirement for strategies under imperfect information is that they are \emph{uniform}:
if agent $a$ is uncertain whether the history so far is $\lambda$ or 
$\lambda'$ ($\lambda \sim_a \lambda'$), then the strategy for $a$ should 
return the same action for both: $F_a(\lambda) = F_a(\lambda')$. Intuitively, 
the agent has no way of choosing different actions in indistinguishable 
situations. A strategy $F_A$ for a group of agents $A$ is uniform if it is 
uniform for every agent in $A$. In what follows, we consider 
\emph{strongly uniform} strategies \cite{Maubert/Pinchinat:14a}, which 
require that a strategy work from all initial states that are indistinguishable by some $a \in A$.

Unfortunately, model-checking for 
ATL under imperfect information with perfect recall uniform strategies, ATL$_{iR}$, is 
undecidable for more than three agents \cite{Dima/Tiplea:11a}. 
It is known that the model checking problem for ATL$_{iR}$
with \emph{bounded} strategies is decidable, 
while for \emph{finite} strategies it is undecidable \cite{Vester:13a}.
Bounded strategies are those that are defined for sequences of states of 
at most some fixed length $k$. In \atlrdir, there is no fixed bound
on the size of strategies, since the size of strategy depends on the formula
and the model. However, we can show that indeed the model checking problem
for \atlrdir with imperfect information and perfect recall strongly
uniform strategies is decidable.


The model checking algorithms are similar to those given for \atlrd in Section \ref{sec:atlr-spec} in that they proceed by and-or depth first search, storing information about the state of the search in a search tree of nodes. However, in this case, the algorithms for Next, Until and Release also take a stack (list) of `open' nodes $B$, a set of `closed' nodes $C$ in addition to an \atlrd formula. $B$ records the current state of the search while $C$ records `successful' branches  (rather than all visited nodes). Uniformity is ensured if action choices are consistent with those taken after $\sim_{a}$ sequences of states on all successful paths explored to date:
($n_1, \ldots, n_k \sim_a n'_1, \ldots, n'_k$ iff $s(n_1), \ldots, s(n_k) \sim_a s(n'_1), \ldots, s(n'_k)$). 
In addition, we assume functions $\hd{u}$, $\tl{u}$ which return the head and tail of a list $u$, and $u\, \circ\, v$ which concatenates the lists $u$ and $v$. (We abuse notation slightly, and treat sets as lists, \eg use $\hd{u}$ where $u$ is a set, to return an arbitrary element of $u$, and use $\circ$ between a set and a list.) 

$M,s \models \AO{A^b}{\phi}$ under strong uniformity requires
 that there exists a uniform
 strategy $F_A$ such that for all $a \in A$, if $s' \sim_a s$, then
for all $b$-maximal $\lambda \in out(s', F_A, b)$: 
$|\lambda| > 1$ and $M, \lambda[2] \models \phi$.
Similarly, in truth conditions for $\AU{A^b}{\phi}{\psi}$ and 
$\AR{A^b}{\phi}{\psi}$ we require the existence of a uniform strategy where 
all $b$-maximal computations starting from states $s'$ indistinguishable 
from $s$ by any $a \in A$ satisfy the Until (respectively, Release) formula.

Weak uniformity only requires the existence of a uniform strategy from $s$.
It is easy to modify the algorithms below to correspond to weak uniformity 
semantics. In fact, the algorithm for $\AO{A^b}{\phi}$ would become
much simpler (identical to that for \atlrd in the previous section).

\begin{algorithm}[h]
\caption{Labelling $\phi_0$}
\label{alg:atlr-label2}
\begin{algorithmic}[1]\small
\Function{\atlrdir-label}{$M, \phi_0$}
\For{$\phi' \in Sub(\phi_0)$}
\Case{$\phi' = p,\ \neg \phi,\ \phi \vee \psi$} \ standard, see \cite{Alur//:02a}
\EndCase
\Case{$\phi' = \AO{A^b}{\phi}$}
\State $[\phi']_M \gets \{\ s \mid s \in S\ \wedge\ $
\StatexIndent[7] $\textsc{\nextstrategy}([node_0(s',b) : s' \sim_{a \in A} s],$ \StatexIndent[9] $ \{\ \}, \AO{A^b}{\phi}) \}$
\EndCase
\Case{$\phi' = \AU{A^b}{\phi}{\psi}$}
\State $[\phi']_M \gets \{\ s \mid s \in S\ \wedge\ $
\StatexIndent[7] $\textsc{\untilstrategy}([node_0(s',b) : s' \sim_{a \in A} s],$ \StatexIndent[9] $\{\ \}, \AU{A^b}{\phi}{\psi} ) \}$
\EndCase
\Case{$\phi' = \AR{A^b}{\phi}{\psi}$}
\State $[\phi']_M \gets \{\ s \mid s \in S\ \wedge\ $
\StatexIndent[7] $\textsc{\releasestrategy}([node_0(s',b) : s' \sim_{a \in A} s],$ \StatexIndent[8] $\ \ \{\ \}, \AR{A^b}{\phi}{\psi} ) \}$
\EndCase
\EndFor
\State $\mathbf{return\ } [\phi_0]_M$
\EndFunction
\end{algorithmic}
\end{algorithm}


\begin{algorithm}[h]
\caption{Labelling $\AO{A^b}{\phi}$} 
\label{alg:x-strategy}
\begin{algorithmic}[1]\small
\Function{\nextstrategy}{$B, C, \AO{A^b}{\phi}$}
\If{$B = [\ ]$}
\State $\mathbf{return}\ \mathit{true}$
\EndIf
\State $n \gets \hd{B}$
\State $Act_A \gets \{\sigma \in D_A(s(n)) \mid \consumption(s(n),\sigma) \leq e(n)\  \wedge\ $  
\StatexIndent[4]  $out(s(n),\sigma) \subseteq [\phi]_M \ \wedge\ \forall a \in A$
\StatexIndent[4] if $\exists n' \in C : p(n) \cdot n \sim_{a} p(n')$ 
\StatexIndent[4] then $\sigma_a = act_a(p(n')[1]) \}$ 
\For{$\sigma \in ActA $}
\If{$\textsc{\nextstrategy}(\tl{B}, C \cup \{ node(n,\sigma,\hd{out(s(n),\sigma)})\},$ \StatexIndent[5] $\ \ \AO{A^b}{\phi})$}
\State $\mathbf{return}\ \mathit{true}$
\EndIf
\EndFor
\State $\mathbf{return}\ \mathit{false}$
\EndFunction
\end{algorithmic}
\end{algorithm}

\begin{algorithm}[h]
\caption{Labelling $\AU{A^b}{\phi}{\psi}$}
\label{alg:until-strategy-imperfect}
\begin{algorithmic}[1] 
\Function{\untilstrategy}{$B, C, \AU{A^b}{\phi}{\psi} $}
\If{$B = [\ ]$}
\State $\mathbf{return}\ \mathit{true}$
\EndIf
\State $n \gets \hd{B}$
\If{$s(n) \in [\psi]_M$}
\State $\mathbf{return}\ \textsc{\untilstrategy}(\tl{B}, C \cup \{n\}, \AU{A^b}{\phi}{\psi})$
\EndIf
\If{$s(n) \not\in [\phi]_M $}
\State $\mathbf{return}\ \mathit{false}$
\EndIf
\State $Act_A \gets \{ \sigma \in D_A(s(n)) \mid \consumption(s(n),\sigma) \leq e(n)\ \wedge\ \forall a \in A$
\StatexIndent[4] if $\exists n' \in C : p(n) \cdot n \sim_{a} p(n')[1, | p(n) \cdot n |]$ 
\StatexIndent[4] then $\sigma_a = act_a(p(n')[| p(n) \cdot n | + 1]) \}$
\For{$\sigma \in Act_A $}
\State $P \gets \{ node(n,\sigma,s') \mid s' \in out(s(n),\sigma) \}$
\If{$\textsc{\untilstrategy}(P \circ \tl{B}, C, \AU{A^b}{\phi}{\psi} )$}
\State $\mathbf{return}\ \mathit{true}$
\EndIf
\EndFor
\State $\mathbf{return}\ \mathit{false}$
\EndFunction
\end{algorithmic}
\end{algorithm}

\begin{algorithm}[h]
\caption{Labelling $\AR{A^b}{\phi}{\psi}$}
\label{alg:release-strategy-imperfect}
\begin{algorithmic}[1] 
\Function{\releasestrategy}{$B, C, \AR{A^b}{\phi}{\psi} $}
\If{$B = [\ ]$}
\State $\mathbf{return}\ \mathit{true}$
\EndIf
\State $n \gets \hd{B}$
\If{$s(n) \in [\psi]_M \cap [\phi]_M$} 
\State $\mathbf{return}\ \textsc{\releasestrategy}(\tl{B}, C \cup \{n\}, \AR{A^b}{\phi}{\psi})$
\EndIf
\If{$s(n) \not \in [\psi]_M $}
\State $\mathbf{return}\ \mathit{false}$
\EndIf
\State $Act_A \gets \{ \sigma \in D_A(s(n)) \mid \forall a \in A$
\StatexIndent[4] if $\exists n' \in C : p(n) \cdot n \sim_{a} p(n')[1, | p(n) \cdot n |]$ 
\StatexIndent[4] then $\sigma_a = act_a(p(n')[| p(n) \cdot n | + 1]) \}$
\For{$\sigma \in Act_A $}
\If{$\consumption(s(n),\sigma) \not\leq e(n)$}
\State $n' \gets node(n,\sigma, s' \in out(s(n),\sigma))$
\If{$\textsc{\releasestrategy}(\tl{B}, C \cup \{n'\}, \AR{A^b}{\phi}{\psi})$}
\State $\mathbf{return}\ \mathit{true}$
\EndIf
\Else
\State $P \gets \{ node(n,\sigma,s') \mid s' \in out(s(n),\sigma) \}$
\If{$\textsc{\releasestrategy}(P \circ \tl{B}, C, \AR{A^b}{\phi}{\psi} )$}
\State $\mathbf{return}\ \mathit{true}$
\EndIf
\EndIf
\EndFor
\State $\mathbf{return}\ \mathit{false}$
\EndFunction
\end{algorithmic}
\end{algorithm}

\begin{lemma}
Algorithm \ref{alg:atlr-label2} terminates in at most $O(|\phi| \times |M|^{k+1})$ steps, where $k$ is the maximal value of the first resource
bound in $\phi$.
\end{lemma}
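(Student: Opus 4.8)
The plan is to follow the template of Lemma~\ref{lem:terminates} for the perfect-information algorithm and to isolate the single extra factor of $|M|$ that the imperfect-information setting introduces. First I would dispose of the two cheap outer contributions of Algorithm~\ref{alg:atlr-label2}: the loop ranges over the $O(|\phi|)$ subformulas in $Sub(\phi_0)$, and the Boolean cases are handled exactly as in standard ATL model-checking in time $O(|M|)$. The cost is therefore dominated by the calls to the strategy functions \textsc{\nextstrategy}, \textsc{\untilstrategy} and \textsc{\releasestrategy} (Algorithms~\ref{alg:x-strategy}, \ref{alg:until-strategy-imperfect} and \ref{alg:release-strategy-imperfect}), so it suffices to bound the work done for a single coalition subformula and then multiply by $|\phi|$.

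The termination and the depth of the search both come from the diminishing resource. Every node produced by $node(n,\sigma,s')$ has $e(n') = e(n) + \production(s(n),\sigma) - \consumption(s(n),\sigma)$, and the actions placed in $Act_A$ keep each component of $e(n')$ non-negative; since the first resource is never produced and $\consumption_1(s,\sigma)\ge 1$ for every action, each edge decreases the first component of $e$ by at least one. Hence every node $n$ satisfies $|p(n)| \le \min_{a\in A}({b_a}_1) \le k$, so each of the three functions builds a search tree of depth at most $k$. Termination is then immediate: expanding a node strictly increases its depth, which is capped at $k$; reaching a $\psi$-state (or, in \textsc{\releasestrategy}, being unable to afford the next action while in a $\psi$-state) closes the head node and removes it from the open list $B$; a failing node is removed as well; and the closed set $C$ only grows and is finite. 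Thus $B$ is eventually emptied and the recursion stops.

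For the step bound I would count the distinct nodes of this bounded-depth search. The children of a node correspond to pairs of a coalition action $\sigma \in D_A(s(n))$ (at most $|\Act|^{|A|}$ of them) and an outcome in $out(s(n),\sigma)$ (at most $|S|$), so their number is bounded by $|M|$; since every path carries at most $k$ states and hence at most $k-1$ edges, the nodes reachable from a single root number $O(|M|^{k-1})$. The genuinely new feature relative to the perfect-information algorithm of Section~\ref{sec:atlr-spec} is that each strategy function is launched on the open list $[node_0(s',b) : s' \sim_{a\in A} s]$, i.e.\ from up to $|S| \le |M|$ roots rather than the single root used in Algorithm~\ref{alg:atlrd-label}; this raises the per-call node count to $O(|M|^{k})$. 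Multiplying by the $\le |S| \le |M|$ starting states enumerated in the outer comprehension and by the $O(|\phi|)$ subformulas gives $O(|\phi|\times|M|^{k+1})$; the per-node bookkeeping — deciding membership in $[\phi]_M$ and $[\psi]_M$ and evaluating the uniformity test against $C$ — is polynomial and leaves the exponent unchanged.

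The step I expect to be the main obstacle is justifying that these functions, which are not a plain depth-first traversal, can still be charged to the bounded-depth unfolding counted above. Two features complicate the count: the \textbf{for}-loops backtrack over the action choices, and the closed set $C$ is both consulted and extended to enforce strong uniformity, so an action committed while closing one branch constrains — and may trigger re-exploration of — the branches grown from the other roots in the open list. I would argue that each such forced re-exploration still visits only nodes of the same depth-$\le k$ unfolding, so it is charged to nodes already counted, and that, exactly as for the perfect-information case, the stated figure is best read as an over-approximation of the size of the explored search space rather than of the raw number of recursive invocations. The decisive ingredient throughout is that the diminishing resource caps every computation at $k$ steps, which is precisely what keeps the and--or search with its uniformity coupling from blowing up; making this charging argument precise is the delicate part.
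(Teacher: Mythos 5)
Your proposal is correct in the same sense and takes essentially the same approach as the paper's own proof: both dispose of the Next case with a quadratic count, bound the Until/Release cases by a depth-$k$ and-or search whose depth is forced by the diminishing resource, attribute the extra factor of $|M|$ to launching the search from all states indistinguishable from $s$ (the paper's ``additional level'' of the tree), and multiply by $|\phi|$ for the subformulas. The uniformity-driven re-exploration that you flag as the delicate charging step is handled no more rigorously in the paper, which simply asserts that in the worst case all possible actions are considered at each of the $O(b)$ levels of the search tree, so your attempt is no less complete than the published argument.
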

\begin{proof}

The algorithm for $\AO{A^b}{\phi}$ attempts to find an action which works
(achieves $\phi$) from all states indistinguishable from $s$ by some
agent in $A$. There are at most $|S|$ such states, and at most $|M|$ possible
actions to try. In the worst case (when no action works in all states) we try
every action in each state: $O(|M|^2)$ steps.

As before, the algorithms for $\AU{A^b}{\phi}{\psi}$ and $\AR{A^b}{\phi}{\psi}$ 
are attempting to find a strategy of depth 
$\min_{a \in A} ({b_a}_1)$, but now from all indistinguishable states and 
satisfying additional constraints of uniformity. Considering all 
indistinguishable states adds an additional level (intuitively the root of 
the tree from which all indistinguishable initial states are reachable). 
Satisfying uniformity means having to backtrack to a successful subtree to 
try a different choice of actions even if the previous choice was successful 
(because the same choice does not work in an indistinguishable branch on 
another tree). In the worst case, we will consider all possible actions at each
of $O(b)$ levels of the search tree.  We repeat this for every subformula ($|\phi|$ many times).
\end{proof}

\begin{lemma}
Algorithm \ref{alg:atlr-label2} is correct.
\end{lemma}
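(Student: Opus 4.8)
The plan is to prove correctness by induction on the structure of $\phi_0$, treating subformulas in the increasing order of complexity used by Algorithm~\ref{alg:atlr-label2}. By the induction hypothesis, when a subformula $\phi'$ is processed the sets $[\psi]_M$ for all proper subformulas $\psi$ of $\phi'$ already contain exactly the states satisfying $\psi$. The propositional and Boolean cases are standard and inherit from the ATL labelling of \cite{Alur//:02a}, so the argument reduces to showing, for each of the three coalition operators and each state $s$, that the call of \textsc{\nextstrategy}, \textsc{\untilstrategy} or \textsc{\releasestrategy} on the initial stack $[node_0(s',b) : s' \sim_{a \in A} s]$ and the empty closed set returns $\mathit{true}$ if and only if $M,s \models \phi'$ under the strong uniformity semantics. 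I would prove the two directions separately.

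For soundness (a $\mathit{true}$ answer implies $M,s\models\phi'$) I would read a uniform strategy $F_A$ off the closed set $C$ at the moment of success. Each closed node records, via $act$, the action taken at its history, so $F_A$ is defined on every history reachable within the search horizon by copying that action and extended arbitrarily elsewhere; the extension is harmless because beyond the point where the operator's obligation is discharged a computation need only remain $b$-maximal, which the diminishing resource guarantees exactly as in Lemma~\ref{lem:correct}. The central point is that $F_A$ is \emph{uniform}: the test performed when building each $Act_A$ set, which forces $\sigma_a = act_a(p(n')[\,|p(n)\cdot n|+1\,])$ whenever some $n'\in C$ satisfies $p(n)\cdot n \sim_a p(n')[1,|p(n)\cdot n|]$, makes any two $\sim_a$-indistinguishable histories in $C$ receive the same $a$-action, so the extracted $F_A$ is well defined and uniform. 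One then checks that every $b$-maximal computation from every initial state $s'$ indistinguishable from $s$ meets the required pattern: the initial stack holds exactly those $s'$, the and-branching over $out(s(n),\sigma)$ covers all successors, and each branch is closed at a state witnessing $\psi$ (Until), $\psi\wedge\phi$ or resource exhaustion while maintaining $\psi$ (Release), or $\phi$ at depth~$2$ (Next).

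For completeness (if $M,s\models\phi'$ then the search returns $\mathit{true}$) I would start from a witnessing uniform strategy $F_A$ and argue that the exhaustive and-or search, which backtracks over all actions in each $Act_A$ set, must encounter an execution consistent with $F_A$. The key observation is that, because $F_A$ is uniform, the action it prescribes for the history at any open node $n$ always lies in the constrained set $Act_A$ computed there: it meets whatever affordability demand that set imposes (full affordability for Next and Until, none in the Release set, where unaffordable actions are admitted and then routed to the resource-exhaustion branch detected by $\consumption(s(n),\sigma)\not\leq e(n)$), and it automatically satisfies every uniformity constraint recorded in $C$, since those constraints only require $F_A$ to agree with itself on $\sim_a$-indistinguishable histories. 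Following $F_A$ therefore drives every open branch to a closing condition, so some path of recursive calls returns $\mathit{true}$.

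The main obstacle I expect is justifying that the order-dependent enforcement of uniformity through the growing set $C$ captures exactly the global uniformity requirement, and in particular that the success of the search is independent of the order in which indistinguishable branches are explored. The delicate case is the one flagged in the termination analysis: an action that succeeds on one branch may later be forced, through $\sim_a$, onto an indistinguishable history where it fails, so the search must be free to backtrack over locally successful choices. I would handle this by proving an invariant of the form: the partial uniform assignment recorded in $C$ can be completed to a uniform strategy that witnesses $\phi'$ from all nodes in $B$ (together with the already-closed nodes of $C$) if and only if the corresponding call on $(B,C)$ returns $\mathit{true}$; instantiating this invariant at the root, with $B$ the list of indistinguishable initial nodes and $C=\emptyset$, yields both directions uniformly. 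Establishing this invariant, rather than the per-operator bookkeeping which is routine given Lemma~\ref{lem:correct}, is the technical heart of the argument.
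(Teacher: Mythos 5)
Your proposal follows essentially the same route as the paper's proof: a per-operator analysis in which soundness is obtained by reading a uniform strategy off the closed list $C$ (uniformity enforced by the constraints in the construction of $Act_A$, with an arbitrary uniform extension once the temporal objective is discharged), completeness by exhaustiveness of the backtracking and-or search following a witnessing uniform strategy, and the same special treatment of the Release case where resource-exhausting actions must themselves be recorded in $C$. The paper's own argument is considerably more informal---it never states your $(B,C)$-completion invariant, relying instead on the remark that the search ``fails back'' to a previously successful branch and retries a different action---so your explicit invariant is a sound formalization of exactly the reasoning the paper sketches, not a genuinely different approach.
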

\begin{proof}
We consider the cases of $\AO{A^b}{\phi}$, $\AU{A^b}{\phi}{\psi}$ and 
$\AR{A^b}{\phi}{\psi}$.

The algorithm for $\AO{A^b}{\phi}$ places all states which are 
indistinguishable from the current state for one of the agents in $A$ in the 
open list $B$. This ensures that a successful strategy (single action $\sigma$
which is $b$-consistent and achieves $\phi$) found in state $s$ will be placed 
in the closed list $C$, and in states $s' \sim_a s$ (indistinguishable for
the agent $a$) the same action $\sigma_a$ will be attempted as part of
the joint action $\sigma'$ by $A$.  If this does not result in a successful 
strategy in $s'$, the algorithm will backtrack and try another action for
$a$ in $s$. The algorithm returns true if and only if in all indistinguishable
states, an action by $A$ is found which always results in a state satisfying 
$\phi$, is under the resource bound, and its $a$th component is the same
in all $\sim_a$ states. This guarantees that the algorithm found a one step
strategy to satisfy the $\phi$. In order to extend it to an arbitrary uniform
strategy, we can simply select the first action in $D_a(s')$ for all sequences
ending in $s'$ and all $a \in A$. This will ensure that all 
$a$-indistinguishable sequences are assigned the same action.

$\AU{A^b}{\phi}{\psi}$ implements the same idea as above, but with respect to
multi-step strategies. Every time an action is selected on some path $p$,
if $p' \sim_a p$ is in the closed list $C$, then $a$'s action after $p$
is selected to be the same as that selected after $p'$. If this is not 
successful then eventually we will fail back to $p'$ and try a different action
there. If the algorithm returns true, then we are guaranteed that the
strategy contained in $C$ is uniform. We can easily extend the strategy
contained in $C$ to a uniform strategy,  since we do not need to achieve any
objectives after satisfying $\psi$.

$\AR{A^b}{\phi}{\psi}$ is similar to $\AU{A^b}{\phi}{\psi}$, but now we have
an additional complication that actions selected to `run out of resources'
need to be in the closed list since they should also satisfy uniformity. 
This is ensured on lines 11-14 of the algorithm (we add a path ending with an
`expensive' action $\sigma$ and an arbitrary successor $n'$ to the closed 
list). 
\end{proof}
\begin{theorem}
The model-checking problem for \atlrd with imperfect information and perfect recall is decidable in EXPSPACE if the resource bounds are represented in
unary.
\end{theorem}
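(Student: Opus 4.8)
The plan is to obtain the theorem from the two lemmas about Algorithm \ref{alg:atlr-label2} in the same spirit as the PSPACE bound for \atlrd, the only real work being a space analysis. First I would invoke the correctness lemma to conclude that Algorithm \ref{alg:atlr-label2} decides the model-checking problem for \atlrdir, and the termination lemma to conclude that on input $M,\phi$ it halts after at most $O(|\phi|\times|M|^{k+1})$ steps, where $k$ is the maximal value of the first resource bound occurring in $\phi$. Since the bounds are written in unary, $k$ is bounded by the size of the input, so this step count is $2^{O((k+1)\log|M|+\log|\phi|)}$, i.e.\ single-exponential in the input. Because any computation that halts in $t$ steps uses at most $t$ space, the algorithm runs in single-exponential space, which already yields the EXPSPACE upper bound.

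The informative part of the proof is to explain why we are forced into exponential space here, whereas the perfect-information case stayed in PSPACE. There, the and-or search could be driven by a single stack holding one root-to-leaf branch at a time, discarding each branch once verified. Under strong uniformity this is no longer sound: the action chosen on one branch of the strategy tree constrains, via the $\sim_a$ relations, the action that must be chosen on every indistinguishable branch, so a verified branch cannot be thrown away. This is exactly why Algorithm \ref{alg:atlr-label2} threads the closed list $C$ through the recursion --- it accumulates the whole successful strategy --- and I would therefore bound $|B|+|C|$ rather than the recursion depth alone.

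To make the bound explicit I would observe that every node stores a path of length at most $k=\min_{a\in A}({b_a}_1)$, since each action consumes at least one unit of the diminishing resource, and each node records a state and a resource vector for $A$, both of polynomial size when $b$ is unary. A strongly uniform strategy is an and-or tree rooted at a virtual node whose children are the (at most $|S|$) indistinguishable initial states $node_0(s',b)$, of depth at most $k+1$, and with branching bounded by the number of outcomes and hence by $|S|$; it therefore has at most $|S|^{k+1}$ nodes. Thus at every point $|B|+|C|=O(\mathrm{poly}(|M|,|\phi|)\cdot|S|^{k+1})$, matching the time bound, and the recursion depth is no larger, so the total working memory is single-exponential in the input when $k$ is unary.

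The main obstacle I expect is the step in the previous paragraph: arguing that the union of the open and closed lists only ever holds fragments of a single candidate strategy tree of depth $k+1$, rather than some product over the indistinguishable initial states or over successive backtracking attempts, which would push the exponent up. One has to check that backtracking to revise a uniform choice releases the subtree it abandons before growing a new one, so that at most one such tree is resident at a time; given this, the single-exponential space bound, and hence membership in EXPSPACE, follows.
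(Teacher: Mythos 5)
Your proposal is correct, and its core space analysis coincides with the paper's own proof, which consists of just two sentences: besides the stack, one must store the closed list $C$, and in the worst case $C$ contains all sequences of states of length at most $\min_{a \in A}({b_a}_1)$, i.e.\ $O(|S|^k)$ of them --- the same bound you place on $|B|+|C|$. What you do differently is your opening observation, which the paper does not make: the termination lemma already bounds the running time by $O(|\phi|\times|M|^{k+1})$, which is single-exponential when $k$ is written in unary, and since a deterministic algorithm uses no more space than time, EXPSPACE membership (indeed EXPTIME, a bound stronger than the theorem claims) follows at once. That shortcut makes the theorem an immediate corollary of the two lemmas; the paper's direct accounting of $C$ instead buys an explanation of \emph{where} the exponential memory goes --- storing the accumulated uniform strategy --- which is exactly what blocks the one-branch-at-a-time stack argument that kept the perfect-information case in PSPACE, a point you also articulate correctly. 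One last remark: the ``main obstacle'' you anticipate is not actually an obstacle. You do not need to show that only one candidate strategy tree is resident at a time: $B$ and $C$ consist of nodes, each determined by an action-labelled path of length at most $k+1$ together with resource vectors computable from that path, so the universe of possible nodes is itself only single-exponential, and $|B|+|C|$ is bounded by its size no matter how backtracking interleaves abandoned and new subtrees; this crude counting is precisely what the paper's worst-case claim about $C$ amounts to, and in any case your time-to-space observation already caps the total space used.
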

\begin{proof}
In addition to the space required for the stack, we also need to store the 
closed list $C$. In the worst case, the closed list will contain all possible 
sequences of states of length at most $\min_{a \in A}({b_a}_1)$, which is
$O(|S|^k)$, where k is
the maximal value of the first resource bound in $\phi$.
\end{proof}

\section{\rald} 
\label{sec:rald}

In this section we define a diminishing resource
version of \emph{resource agent logic} (\rald) following \cite{Bulling/Farwer:10a}, with modifications
required for our setting (\eg no infinite endowments). 

The logic is defined over a set of agents $\Agt$, a set of resources types $\Res$, and a set of propositional symbols $\Props$. 

An \emph{endowment (function)} $\eta : \Agt \times \Res \rightarrow \nat$ assigns resources to agents; $\eta_a(r)=\eta(a,r)$ is the amount of resource agent $a$ has of resource type $r$.  $\Enments$ denotes the set of all possible endowments. 

The formulas of RAL$^{\spec}$ are defined by:

\begin{center}$\phi, \psi::=  p \mid \neg \phi \mid \phi \wedge \phi \mid
         \coopdown[B]{A}\Next \phi \mid \coop{A}{}_B^\eta\Next \phi \mid
         \coopdown[B]{A}\phi \NUntil \psi \mid \coop{A}{}_B^\eta\phi \NUntil \psi \mid
 \coopdown[B]{A}\phi \NRel \psi \mid \coop{A}{}_B^\eta\phi \NRel \psi$ 
\end{center} 
where $p \in\Props$ is a proposition, $A, B\subseteq \Agt$ are sets of agents, and  $\eta$ is an endowment. $A$ are called the proponents, and $B$ the (resource-bounded) opponents.

Unlike in \atlrd, in \rald there are two types of cooperation modalities, $\coopdown[B]{A}$ and $\coop{A}{}_B^\eta$. In both types of cooperation modality, the actions performed by agents in $A\cup B$  consume and produce resources (actions by agents in $\Agt \setminus (A \cup B)$ do not change their resource endowment). The meaning
of $\coop{A}{}_B^\eta\varphi$ is otherwise the same as in \atlrd. 
The formula $\coopdown[B]{A}\varphi$ on the other hand requires that
the strategy uses the resources \emph{currently} available to the agents.

The models of RAL$^{\spec}$ are resource-bounded concurrent game structures with diminishing resource (RB-CGS$^\spec$). Strategies are also defined as for \atlrd. However, to evaluate formulas with
a down arrow, such as $\coopdown[B]{A}\Next \varphi$, we need the notion of \emph{resource-extended computations}.
A \emph{resource-extended} computation $\lambda \in (S\times\Enments)^+$ is a non-empty sequence over $S\times\Enments$ such that the restriction to states (the first component), denoted by $\lambda|_S$, is a path in the underlying model.  The projection of $\lambda$ to the second component of each element in the sequence is denoted by $\lambda|_\Enments$.

A \emph{$(\eta,s_A,B)$-computation} is a resource-extended computation $\lambda$ where for all $i=1,\ldots$ with $\lambda[i]:=(s_i,\eta^i)$ there is an action profile $\sigma \in d(\lambda|_\States[i])$ such that:
\begin{enumerate}
\item  $\eta^0 = \enment$ ($\enment$ describes the initial resource distribution);
\item $F_A(\lambda|_\States[1,i])=  \sigma_A$ ($A$ follow their strategy);
\item $\lambda|_S[i+1]=\delta(\lambda|_S[i],\sigma)$ (transition according to $\sigma$);
\item for all $a \in A\cup B$ and $r\in\Res$: $\eta^{i}_a(r) \geq \consumption_r(\lambda|_S[i],\sigma_a)$
 (each agent has enough resources to perform its action);
\item for all $a \in A\cup B$ and $r\in\Res$: $\eta^{i+1}_a(r) = \eta^i_a(r) + \production_r(\lambda|_S[i], \sigma_a) - \consumption_r(\lambda|_S[i], \sigma_a)$ (resources are updated);
\item for all $a \in \Agt \setminus (A \cup B)$ and $r\in\Res$: $\eta^{i+1}_a(r) = \eta^i_a(r)$ (the resources of agents not in $A\cup B$ do not change).
\end{enumerate}
The \emph{$(\eta,B)$-outcome} of a strategy $F_{A}$ in $s$, $\rhooutcome{s}{\eta}{F_{A},B}$, is defined as the set of all  $(\eta,F_{A},B)$-computations starting in $s$.  Truth is defined over a model $\model$, a state $s \in \States$, and an endowment $\eta$.

The semantics is given by the satisfaction relation $\modelsR$  where the cases for propositions, negation and conjunction are standard and omitted:
\begin{description}
\item[$\model,s,\enment \modelsR{\coopdown[B]{A}}\Next \varphi$]
  iff there is a strategy $F_A$ for $A$ such that
  for all $\onepath\in out(s,\enment,F_A,B)$, $|\onepath| > 1$ and $\model,\onepath|_S[2],$ $\onepath|_\Enments[2] \modelsR \varphi$
 
\item[$\model,s,\enment \modelsR{\coop{A}}{}_B^\zeta \Next \varphi$]
  iff there is a strategy $F_A$ for $A$ such that
  for all $\onepath\in out(s,\zeta,F_A,B)$,
$|\onepath| > 1$ and $\model,\onepath|_S[2],$ $\onepath|_\Enments[2] \modelsR \varphi$

\item[$\model,s,\enment \modelsR{\coopdown[B]{A}}\varphi\NUntil\psi$ ]
  iff there is a strategy $F_A$ for $A$ such that
  for all $\onepath\in out(s,\enment,F_A,B)$, 
there exists $i$ with $1 \leq i \leq |\onepath| $
  and
  $\model,\onepath|_S[i],\onepath|_\Enments[i] \modelsR \psi$ and
  for all $j$ with $1 \leq j < i$,
  $\model,\onepath|_S[j],\onepath|_\Enments[j] \modelsR \varphi$

\item[$\model,s,\enment \modelsR{\coop{A}}{}_B^\zeta \varphi\NUntil\psi$]
  iff there is a strategy $F_A$ for $A$ such that
  for all $\onepath\in out(s,\zeta,F_A,B)$,
there exists $i$ with $1 \leq i \leq |\onepath| $
  and
  $\model,\onepath|_S[i],\onepath|_\Enments[i] \modelsR \psi$ and
  for all $j$ with $1 \leq j < i$,
  $\model,\onepath|_S[j],\onepath|_\Enments[j] \modelsR \varphi$

\item[$\model,s,\enment \modelsR{\coopdown[B]{A}}\varphi\NRel \psi$ ]
  iff there is a strategy $F_A$ for $A$ such that
  for all $\onepath\in out(s,\enment,F_A,B)$,
either there exists $i$ with $1 \leq i \leq  |\onepath| $
  and
  $\model,\onepath|_S[i],\onepath|_\Enments[i] \modelsR \psi \wedge \varphi$ and
  for all $j$ with $1 \leq j < i$,
  $\model,\onepath|_S[j],\onepath|_\Enments[j] \modelsR \psi$;
or, for all $j$ with $1 \leq j \leq  |\onepath| $,
  $\model,\onepath|_S[j],\onepath|_\Enments[j] \modelsR \psi$ 

\item[$\model,s,\enment \modelsR{\coop{A}}{}_B^\zeta \varphi\NRel \psi$]
  iff there is a strategy $F_A$ for $A$ such that
  for all $\onepath\in out(s,\zeta,F_A,B)$, either
there exists $i$ with $1 \leq i \leq  |\onepath| $
 and
  $\model,\onepath|_S[i],\onepath|_\Enments[i] \modelsR \psi \wedge \varphi$ and
  for all $j$ with $1 \leq j < i$,
  $\model,\onepath|_S[j],\onepath|_\Enments[j] \modelsR \psi$;
or, for all $j$ with $1 \leq j \leq  |\onepath| $,
  $\model,\onepath|_S[j],\onepath|_\Enments[j] \modelsR \psi$ 

\end{description}


The model checking algorithms for \rald are similar to those given for \atlrd in Section \ref{sec:atlr-spec} in that they proceed by and-or depth first search. However, in this case, the nodes in the search tree also include information about the current proponent and (resource-bounded) opponent coalitions, and the functions that construct nodes are redefined as $\mathit{node}_0(s,b,A,B)$ and $\mathit{node}(n, \sigma, s',A,B)$ where $A$ are the proponents and $B$ are the resource-bounded opponents.

The model checking algorithm for \rald is shown in Algorithm \ref{alg:rald-label}, and takes as input a model
$\model$, a formula $\phi$, and an initial endowment $\eta$, and labels the set of states $[\phi]_\model^\enment$, where $[\phi]_\model^\enment = \{s\ |\ \model , s, \enment \models \phi\}$ is the set of states satisfying $\phi$.
\textsc{\rald-label} simply calls the function \textsc{strategy} to label states with $\phi$. $\mathit{pr}$ and $\mathit{op}$ are functions that return the proponents $A \subseteq \Agt$  and the resource-bounded opponents $B \subseteq \Agt$ respectively if $\phi$ is
of the form $\coop[B]{A}^* \Next \psi$, $\coop[B]{A}^* \psi_1\NUntil\psi_2$, $\coop[B]{A}^* \psi_1 \NRel \psi_2$
where
$*$ is either $\downarrow$ or an endowment,
or $\emptyset$ otherwise. 

\begin{algorithm}
\caption{Labelling $\phi$ }
\label{alg:rald-label}
\begin{algorithmic}[1]\small
\Procedure{\rald-label}{$\model, \phi, \eta $}
\State $[\phi]_{\model}^\enment  \gets \{\ q \mid q \in S\ \wedge$
\StatexIndent[5] $\Call{strategy}{node_0(q,\enment,\mathit{pr}(\phi), \mathit{op}(\phi)), \phi}\}$
\EndProcedure
\end{algorithmic}
\end{algorithm}

The function \textsc{strategy} is shown in Algorithm \ref{alg:rald-strategy} and 
proceeds by depth-first and-or search. We process each coalition modality in turn, starting from the outermost modality. The logical connectives are standard, and simply call \textsc{strategy} on the subformulas. Each temporal operator is handled by a separate function: \textsc{\nextstrategy} for  $\Next \psi$, \textsc{\untilstrategy} for $\phi \NUntil \psi$, and \textsc{\releasestrategy} for $\phi \NRel \psi $.

\begin{algorithm}
\caption{Strategy}
\label{alg:rald-strategy}
\begin{algorithmic}[1]\small
\Function{strategy}{$n, \phi$}
\Case{$\phi = p \in \Pi$}
\State $\mathbf{return}\ s(n) \in \pi(p)$ 
\EndCase
\Case{$\phi = \neg \psi$}
\State $\mathbf{return}\ \neg \Call{strategy}{node_0(s(n),e(n),pr(n),op(n)), \psi}$
\EndCase
\Case{$\phi = \psi_1 \vee \psi_2$}
\State $\mathbf{return}\ \Call{strategy}{node_0(s(n),e(n),pr(n),op(n)), \psi_1}\ \vee$
\StatexIndent[4] $\ \ \Call{strategy}{node_0(s(n),e(n),pr(n),op(n)), \psi_2}$
\EndCase
\Case{$\phi  = \coopdown[B]{A}\Next\psi$}
\State $\mathbf{return}\ \Call{\nextstrategy}{node_0(s(n),e(n),A,B), \phi}$
\EndCase
\Case{$\phi  = \coop[B]{A}^\zeta\Next\psi$}
\State $\mathbf{return}\ \Call{\nextstrategy}{node_0(s(n),\zeta,A,B), \phi}$
\EndCase
\Case{$\phi  = \coopdown[B]{A}\psi_1\NUntil\psi_2$}
\State $\mathbf{return}\ \Call{\untilstrategy}{node_0(s(n),e(n),A,B), \phi}$
\EndCase
\Case{$\phi  = \coop[B]{A}^\zeta\psi_1\NUntil\psi_2$}
\State $\mathbf{return}\ \Call{\untilstrategy}{node_0(s(n),\zeta,A,B), \phi}$
\EndCase
\Case{$\phi  = \coopdown[B]{A} \psi_1 \NRel \psi_2$}
\State $\mathbf{return}\ \Call{\releasestrategy}{node_0(s(n),e(n),A,B), \phi}$
\EndCase
\Case{$\phi  = \coop[B]{A}^\zeta \psi_1 \NRel \psi_2$}
\State $\mathbf{return}\ \Call{\releasestrategy}{node_0(s(n),\zeta,A,B), \phi}$
\EndCase
\EndFunction
\end{algorithmic}
\end{algorithm}

The function \textsc{\nextstrategy} for formulas of types $\coop[B]{A}^\downarrow \Next \phi$ and $\coop[B]{A}^\zeta \Next \phi$ is shown in Algorithm \ref{alg:rald-next}
and is straightforward. We simply check if there is an action of $A$ that is possible given the current endowment (lines 2--4), and where in all outcome states $A$ has a strategy to enforce $\phi$ (lines 6--10). Note that the recursive call (line 8) is to \textsc{strategy}, to correctly determine the endowments for the new search in both the case where $\phi$ specifies a fresh endowment or the resources currently available to the agents (i.e., down arrow). 

\begin{algorithm}
\caption{Next (both types of modalities) }
\label{alg:rald-next}
\begin{algorithmic}[1]\small
\Function{\nextstrategy}{$n, \coop[B]{A}^* \Next \phi$}
\State $Act_{A} \gets \{ \sigma' \in D_{A} (s(n)) \mid \consumption(\sigma') \leq e_{A}(n) \}$
\For{$\sigma' \in Act_{A}$}
\State $Act_{Agt}  \gets \{\sigma \in D(s(n))\mid \sigma_{A} = \sigma' \wedge\ $
\StatexIndent[9] $\ \ \consumption(\sigma_B) \leq e_{B}(n) \}$
\State $\mathit{strat} \gets \mathit{true}$
\For{$\sigma \in Act_{Agt}$}
\State $s' \gets \delta(s(n),\sigma)$
\State $\mathit{strat} \gets \mathit{strat}\ \wedge\ \Call{strategy}{node(n, \sigma, s', A, B), \phi }$ 
\EndFor
\If{$ \mathit{strat}$}
\State $\mathbf{return}\ \mathit{true}$
\EndIf
\EndFor
\State $\mathbf{return}\ \mathit{false}$
\EndFunction
\end{algorithmic}
\end{algorithm}

The function \textsc{\untilstrategy} for formulas of types $\coop[B]{A}^\downarrow \phi \NUntil\psi$ and $\coop[B]{A}^\zeta \phi\NUntil\psi$ is shown in Algorithm \ref{alg:rald-until}. If $A$ have a strategy to enforce $\psi$, we return true (lines 2--3). We then check if it is possible to enforce $\phi$ in $n$, and terminate the search with false if it is not (lines 4--5). Otherwise the search continues. Each action available at $s(n)$ is
considered in turn (lines 6--14). For each action $\sigma' \in Act_A$, we check whether a recursive call of the algorithm returns true in all outcome states $s'$ of $\sigma'$  (i.e., $\sigma'$ is part of a successful strategy). If such a $\sigma'$ is found, the algorithm returns true. Otherwise the algorithm returns false. 

The function \textsc{\releasestrategy} for formulas of types $\coop[B]{A}^\downarrow \phi \NRel\psi$ and $\coop[B]{A}^\zeta \phi\NRel\psi$ is similar (see Algorithm \ref{alg:rald-release}).

\begin{algorithm}
\caption{Until (both types of modalities)}
\label{alg:rald-until}
\begin{algorithmic}[1]\small
\Function{\untilstrategy}{$n, \coop[B]{A}^* \phi \NUntil \psi $}
\If{$\Call{strategy}{n, \psi}$}
\State $\mathbf{return}\ \mathit{true}$
\EndIf
\If{$\neg\, \Call{strategy}{n, \phi}$}
\State $\mathbf{return}\ \mathit{false}$
\EndIf
\State $Act_{A} \gets \{ \sigma' \in D_{A}(s(n)) \mid \consumption(\sigma') \leq  e_A(n) \}$
\For{$\sigma' \in Act_{A} $}
\State $Act_{Agt}  \gets \{\sigma \in D(s(n))\mid \sigma_A = \sigma' \wedge\ $
\StatexIndent[9.5] $ \consumption(\sigma_B) \leq e_{B}(n) \}$
\State $\mathit{strat} \gets \mathit{true}$
\For{$\sigma \in Act_{Agt}$}
\State $s' \gets \delta(s(n),\sigma)$
\State $\mathit{strat} \gets \mathit{strat}\ \wedge \ $
\StatexIndent[5.5] $\Call{\untilstrategy}{node(n,\sigma,s', A,B), \coop[B]{A}^* \phi \NUntil \psi}$
\EndFor
\If{$ \mathit{strat}$}
\State $\mathbf{return}\ \mathit{true}$
\EndIf
\EndFor
\State $\mathbf{return}\ \mathit{false}$
\EndFunction
\end{algorithmic}
\end{algorithm}

\begin{algorithm}
\caption{Release (both types of modalities)}
\label{alg:rald-release}
\begin{algorithmic}[1]\small
\Function{\releasestrategy}{$n, \coop[B]{A}^* \phi \NRel \psi $}
\If{$\neg \Call{strategy}{n, \psi}$}
\State $\mathbf{return}\ \mathit{false}$
\EndIf
\If{$\Call{strategy}{n, \phi}$}
\State $\mathbf{return}\ \mathit{true}$
\EndIf

\If{$\exists\, \sigma \in D_{A} \text{\ s.t.\ } \consumption(s(n),\sigma) \not\leq e_{A}(n))$}
\State $\mathbf{return}\ \mathit{true}$
\EndIf

\State $Act_{A} \gets \{ \sigma' \in D_{A}(s(n)) \mid \consumption(\sigma') \leq e_A(n) \}$
\For{$\sigma' \in Act_{A} $}
\State $Act_{Agt}  \gets \{\sigma \in D(s(n))\mid \sigma_A = \sigma' \wedge\ $
\StatexIndent[9.5] $ \consumption(\sigma_B) \leq e_{B}(n) \}$
\State $\mathit{strat} \gets \mathit{true}$
\For{$\sigma \in Act_{Agt}$}
\State $s' \gets \delta(s(n),\sigma)$
\State $\mathit{strat} \gets \mathit{strat}\ \wedge \ $
\StatexIndent[5.5] $\Call{\releasestrategy}{node(n,\sigma,s', A,B), \coop[B]{A}^* \phi \NRel \psi}$
\EndFor
\If{$ \mathit{strat}$}
\State $\mathbf{return}\ \mathit{true}$
\EndIf
\EndFor
\State $\mathbf{return}\ \mathit{false}$
\EndFunction
\end{algorithmic}
\end{algorithm}

\begin{lemma}
Algorithm \ref{alg:rald-strategy} terminates in $O(|M|^{|\phi|})$ steps, where the bounds in $\phi$ are written in unary.
\end{lemma}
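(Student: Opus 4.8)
The plan is to bound the size of the recursion tree produced by \textsc{strategy} on input $(node_0(q,\eta,\mathit{pr}(\phi),\mathit{op}(\phi)),\phi)$ and multiply by the per-node work, then argue termination follows from finiteness of this tree. The key structural fact I would establish first is exactly the one exploited for \atlrd in Lemma~\ref{lem:terminates}: because the first resource is diminishing and $\consumption_1(s,\sigma)\geq 1$ for every action, every resource-extended computation consistent with a strategy under a finite endowment has length at most $k$, where $k$ is the maximal value of the first resource bound occurring in $\phi$ (for $\downarrow$-modalities the relevant bound is whatever the agents currently hold, which is itself bounded by the outermost endowment, so no new unboundedness is introduced). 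Hence each of \textsc{\nextstrategy}, \textsc{\untilstrategy}, \textsc{\releasestrategy} performs an and-or search whose depth is at most $k$ and whose branching at each node is bounded by the number of joint actions $|D(s(n))|$, which is $O(|M|)$; the search tree for a single coalition modality therefore has at most $O(|M|^{k})$ nodes.

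Next I would account for the nesting of modalities. Unlike the single-modality lemmas of Section~\ref{sec:atlr-spec}, \textsc{strategy} recurses through the formula structure: the Boolean cases call \textsc{strategy} on immediate subformulas, and each temporal case, upon reaching a state where an inner subformula ($\psi$, or $\phi$) must be tested, issues a fresh call to \textsc{strategy} via the recursive-descent structure. The crucial observation is that the test ``$s(n)\in[\psi]_M$'' of the \atlrd algorithms is here replaced by a call \Call{strategy}{$n,\psi$}, so the cost of processing one modality already folds in the cost of solving its subformulas. I would therefore set up a recurrence on formula depth: if $T(\phi)$ bounds the number of steps to evaluate \textsc{strategy} on a formula $\phi$ from one node, then processing the outermost modality explores $O(|M|^{k})$ nodes, and at each such node it may invoke \textsc{strategy} on the (strictly smaller) subformulas. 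Unwinding this over the $|\phi|$ modalities gives a product that is dominated by $|M|^{k\cdot|\phi|}$, and since $k\leq |\phi|$ when bounds are written in unary (the value $k$ is bounded by the length of the unary encoding, hence by $|\phi|$), this collapses into the stated $O(|M|^{|\phi|})$ bound.

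The step I expect to be the main obstacle is making the recurrence over nested modalities precise while keeping the final exponent at $|\phi|$ rather than something like $k\cdot|\phi|$ or $|M|^{k^{|\phi|}}$. The naive bound from a depth-$k$ tree per modality, compounded $|\phi|$ times, threatens a tower or at least an exponent of $k\cdot|\phi|$; getting down to the clean $O(|M|^{|\phi|})$ requires absorbing the resource depth $k$ into $|\phi|$ using the unary-encoding hypothesis and arguing that the total number of distinct $(\text{state},\text{endowment},\text{proponent},\text{opponent})$ configurations reachable across the \emph{entire} nested search — not per modality — is what governs the count. I would argue that every node in the global recursion is identified by a resource-extended computation of total length at most $k$ together with a position in the subformula ordering $Sub(\phi)$, so the number of distinct nodes is bounded by $O(|S|^{k}\cdot|\phi|)$ reachable configurations, and with $k$ absorbed into $|\phi|$ via the unary bound this yields the claimed $O(|M|^{|\phi|})$.

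The remaining routine verification — that each individual node does only polynomial local work (enumerating $D(s(n))$, checking $\consumption(\sigma_B)\leq e_B(n)$ and $\consumption(\sigma')\leq e_A(n)$, and the membership tests folded into recursive \textsc{strategy} calls) and that the $\downarrow$-modalities do not escape the endowment bound because they reuse $e(n)\leq\eta$ — I would dispatch briefly, noting it parallels the per-step analysis already carried out for \atlrd.
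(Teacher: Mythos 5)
You have correctly identified both the structural novelty (the $\downarrow$-modalities prevent labelling states with subformulas, so the membership tests of the \atlrd algorithms become recursive calls to \textsc{strategy}, whose termination follows by induction on formula complexity) and the main obstacle (keeping the exponent at $|\phi|$ despite nesting of modalities). However, neither of your two attempted resolutions of that obstacle is sound, so there is a genuine gap. First, unwinding your recurrence gives $|M|^{k\cdot|\phi|}$, and substituting $k\leq|\phi|$ yields $|M|^{|\phi|^2}$, not $O(|M|^{|\phi|})$; nothing ``collapses''. Second, your global count rests on the claim that every node of the entire nested recursion is identified by a resource-extended computation of length at most $k$, the \emph{maximal} first-resource bound in $\phi$. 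That claim is false: a nested modality with a fresh endowment restarts the resource clock. For example, in $\coop[B]{A}^{\zeta_1}(p \NUntil \coop[B]{A}^{\zeta_2}(q \NUntil r))$ the outer \textsc{\untilstrategy} launches, at \emph{every} node of its depth-$k_1$ tree, a fresh inner search of depth up to $k_2$; with $k_1=k_2=k$ the global recursion has on the order of $|M|^{2k}$ nodes, which exceeds your claimed bound of $O(|S|^{k}\cdot|\phi|)$.

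The idea you are missing --- and it is exactly the step the paper's proof turns on --- is that the depth of the \emph{global} search tree is bounded by the \emph{sum}, over the nested modalities of $\phi$, of the minimal first-resource bounds occurring in their fresh endowments, while $\downarrow$-modalities contribute nothing extra because they merely continue with the resources remaining from the enclosing modality. Since each fresh endowment is written in unary inside $\phi$, this sum is at most $|\phi|$; a branching factor of $O(|M|)$ on a tree of depth at most $|\phi|$ then gives the stated $O(|M|^{|\phi|})$ directly, with no recurrence and no per-modality compounding. The repair is small --- replace ``maximum bound $k$'' by ``sum of the nested bounds'' in your final counting argument --- but as written your proposal establishes only the weaker bound $O(|M|^{|\phi|^2})$ via the first route and asserts an incorrect node count via the second.
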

\begin{proof}
The only difference between the RAL$^\spec$ algorithms and the algorithms in
section \ref{sec:atlr-spec} is the fact that in the case of RAL$^\spec$
we cannot label states with subformulas. For example, we cannot find states 
satisfying $\coopdown[B]{A}\phi \NUntil \psi$ because we do not know which 
endowment
the $\downarrow$ refers to. When verifying a formula with non-propositional
subformulas, for example $\coopdown[B]{A}\phi \NUntil \psi$ again, where $\phi$
and $\psi$ are not propositional, we have to make recursive calls to check
whether the current state satisfies $\phi$ or $\psi$ \emph{with the current
endowment}. Hence the checks for $\Call{strategy}{n, \phi}$ instead of
checking whether $s(n) \in [\phi]_M$. However the recursive calls are always
to formulas of lower complexity, and it is easy to show that in the
propositional case they do terminate, and that under the inductive assumption
if lower complexity calls terminate, then the calls to 
$\coop[B]{A}^* \Next \phi$, $\coop[B]{A}^* \phi \NUntil \psi$ and 
$\coop[B]{A}^* \phi \NRel \psi$ terminate.

The algorithm again performs depth first and-or search, but now up to the depth determined by the nestings of modalities in $\phi$: we need to take the sum
of the minimal bounds for the first resource occurring in the endowment of 
some resource bounded agent in nested formulas to find the maximal depth of 
the tree. We can ignore $\downarrow$ endowments because they will use the 
amount of the first resource remaining from the outer modalities.
\end{proof}

\begin{lemma}
Algorithm \ref{alg:rald-strategy} is correct.
\end{lemma}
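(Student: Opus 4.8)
The plan is to show, by induction on the complexity of $\phi$, that for every state $s$ and endowment $\eta$ the call \textsc{strategy}$(\mathit{node}_0(s,\eta,\mathit{pr}(\phi),\mathit{op}(\phi)),\phi)$ returns $\mathit{true}$ iff $\model,s,\eta\models\phi$. The induction is well-founded because, as already noted in the termination lemma, every recursive call of \textsc{strategy} in Algorithm~\ref{alg:rald-strategy} is on a strict subformula. The propositional case is immediate from the test $s(n)\in\pi(p)$, and the Boolean cases follow directly from the induction hypothesis: the recursive calls rebuild a root node with the \emph{same} state $s(n)$ and endowment $e(n)$ (resetting only the path), so $\neg\psi$ and $\psi_1\vee\psi_2$ are evaluated against exactly the endowment required by the semantics. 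The substance therefore lies in the three temporal cases, which \textsc{strategy} dispatches by re-rooting the search at $\mathit{node}_0(s(n),e(n),A,B)$ for a $\downarrow$-modality and at $\mathit{node}_0(s(n),\zeta,A,B)$ for an explicit endowment $\zeta$. Since this choice of initial endowment is the only difference between the two families of modalities, it suffices to verify each helper function once, parametrically in the endowment stored in the root node.

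For the temporal cases I would follow the proof of Lemma~\ref{lem:correct} for \atlrd, since \textsc{\nextstrategy}, \textsc{\untilstrategy} and \textsc{\releasestrategy} carry out the same and-or depth-first search; the two genuinely new ingredients are the resource-bounded opponents $B$ and the replacement of a pre-computed labelling by recursive subformula checks. For the opponents, the key observation is that $Act_{Agt}=\{\sigma\in D(s(n))\mid \sigma_A=\sigma'\ \wedge\ \consumption(\sigma_B)\leq e_B(n)\}$ enumerates exactly the joint actions that can occur in an $(\eta,F_A,B)$-computation once $A$ has committed to $\sigma'$: agents in $B$ are restricted to affordable actions (condition~4 in the definition of an $(\eta,F_A,B)$-computation) while agents outside $A\cup B$ range freely. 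Hence the ``or'' over $Act_A$ is the existential choice of $A$'s next move, and the ``and'' over $Act_{Agt}$, together with the recursive descent through $\mathit{node}(n,\sigma,s',A,B)$ (which sets $e_a(n')=e_a(n)+\production(s(n),\sigma_a)-\consumption(s(n),\sigma_a)$ for $a\in A\cup B$ and leaves it fixed otherwise, matching conditions~5 and~6), is the universal quantification over the resulting computations. Within a single modality I would argue by a secondary induction on $\min_{a\in A}(e_a(n))_1$, which strictly decreases along every branch because the first resource is diminishing; the base case is a node at which $Act_A$ is empty, i.e.\ a $b$-maximal computation. From a successful search tree the witnessing strategy $F_A$ is read off by choosing the successful $\sigma'$ at each visited node and arbitrary affordable actions elsewhere, and conversely any semantic witness prescribes choices that the ``or''-loop eventually explores.

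The recursive subformula checks are absorbed by the outer induction hypothesis: when \textsc{\untilstrategy} evaluates \textsc{strategy}$(n,\psi)$ (respectively \textsc{strategy}$(n,\phi)$) at a node reached with endowment $e(n)$, this is precisely the semantic condition $\model,\lambda|_S[i],\lambda|_\Enments[i]\models\psi$ at the current position with the current resources, which is what a nested $\downarrow$ is required to read, and by the invariant stated below $e(n)$ indeed equals $\lambda|_\Enments[i]$. For \textsc{\releasestrategy} there are two extra points, both already present in the \atlrd argument: line~4 returns $\mathit{true}$ once $\phi$ holds, and since line~2 guarantees $\psi$, this realises the first disjunct $\psi\wedge\phi$ of the Release semantics; lines~6--7 return $\mathit{true}$ when some action of $A$ is unaffordable, because $A$ may then prescribe that action to force the current, $\psi$-satisfying computation to become $b$-maximal, realising the second disjunct ($\psi$ everywhere).

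The step I expect to be the main obstacle is making the informal ``read off a strategy from the search tree'' argument precise in the simultaneous presence of a resource-bounded opponent coalition and nested $\downarrow$-modalities whose coalitions need not be contained in $A\cup B$. I would isolate the needed fact as an explicit invariant on $\mathit{node}_0(\cdot)$ and $\mathit{node}(\cdot)$: along any branch of the search the vector $e(n)$ agrees with $\lambda|_\Enments$ of the corresponding $(\eta,F_A,B)$-computation for \emph{every} agent simultaneously --- those outside $A\cup B$ retaining their initial endowment by condition~6 --- so that a subsequent recursive subformula check, whatever coalition its outermost $\downarrow$-modality names, sees the correct resources. Discharging this invariant by induction along each branch, and then checking that the strategy extracted for the outer modality composes consistently with those extracted by the inner recursive calls, is routine bookkeeping but is where an error would most plausibly hide.
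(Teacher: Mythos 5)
Your proposal is correct and follows essentially the same route as the paper: the paper's own proof reduces correctness to that of the \atlrd algorithms (Lemma~\ref{lem:correct}), treating each recursive call \textsc{strategy}$(n,\phi)$ as equivalent to the labelling check $s(n)\in[\phi]_M$ and noting that the current endowment is passed to the $\downarrow$ modalities as if it were an explicit bound $b$. Your version simply makes this reduction explicit --- the outer induction on subformulas, the handling of the resource-bounded opponents via $Act_{Agt}$, and the endowment invariant along branches are all elaborations of details the paper leaves implicit.
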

\begin{proof}
Assuming that calls to $\Call{strategy}{n,\phi}$ terminate and have the same
effect as checking whether $s(n) \in [\phi]_M$, the algorithms are the same
as for \atlrd. The only small difference is that we remember the current
endowment and pass it to the $\downarrow$ modalities as if it was an explicit
bound $b$ in \atlrd.
\end{proof}

\begin{theorem}
The model-checking problem for   RAL$^\spec$ is decidable in PSPACE (if resource bounds are written in unary).
\end{theorem}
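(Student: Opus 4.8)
The plan is to combine the termination bound from the preceding lemma with a space-efficient implementation of the depth-first and-or search, exactly mirroring the PSPACE argument already given for \atlrd. The key observation carried over from the diminishing-resource setting is that every action consumes at least one unit of the first resource, so along any branch of the search tree the first component of the endowment strictly decreases; hence each branch has length bounded by the relevant first-resource bound, and the entire computation is finite (this is what Lemma on termination of Algorithm~\ref{alg:rald-strategy} establishes). First I would argue that, as in the \atlrd proof, the and-or search can be arranged so that at any moment only a single root-to-leaf branch is held on an explicit stack, together with the bookkeeping (current state, current endowment vector, current proponent/opponent coalitions) stored in each node on that branch.

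Next I would account for the new feature specific to \rald, namely nested cooperation modalities and the recursive calls to \textsc{strategy} that replace the state-labelling used in \atlrd. Because we cannot pre-label states with non-propositional subformulas (the $\downarrow$ modalities depend on the endowment reached at run time), verifying a subformula may trigger a recursive invocation of \textsc{strategy} on a formula of strictly lower modal complexity. I would bound the space used by the whole recursion by summing, over the nesting of modalities in $\phi$, the depth contributed by each modality: each explicit-endowment modality contributes a branch of length at most its own first-resource bound, while each $\downarrow$ modality contributes nothing new because it reuses the first-resource amount remaining from the enclosing modality. The total stack depth is therefore bounded by the sum of the minimal first-resource bounds over the resource-bounded agents appearing in the nested endowments, which is $\min_{a}({b_a}_1)$-style quantity summed over modal nesting, and this is linear in the size of $\phi$ when the bounds are written in unary.

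I would then conclude that the space needed for the single branch on the stack is polynomial: each node stores a state (logarithmic in $|M|$), an endowment vector whose entries are bounded by the input bounds (polynomial in unary representation), and coalition identifiers (at most $|\Agt|$ bits each), and the number of nodes on the stack is polynomial as argued above. Since the recursive calls to \textsc{strategy} for lower-complexity subformulas can be evaluated one at a time and their space reclaimed on return, the recursion contributes only a polynomial multiplicative overhead in $|\phi|$. Putting these together gives a polynomial space bound, and by Savitch-style reasoning the nondeterministic and-or exploration does not cost more than deterministic polynomial space, establishing membership in PSPACE.

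The main obstacle I expect is the careful space accounting for the nested-modality recursion: unlike \atlrd, where subformulas are labelled once and reused, here the $\downarrow$ modalities force re-evaluation under the run-time endowment, so one must verify that these recursive calls genuinely reuse rather than accumulate the first-resource budget and that the stack never simultaneously holds more than one branch per level of modal nesting. Making precise that the contributions of nested modalities \emph{add} (rather than multiply) in the depth bound --- and that $\downarrow$ contributes zero --- is the delicate point; once that is established, the reduction to the \atlrd PSPACE argument is routine, invoking the correctness lemma for Algorithm~\ref{alg:rald-strategy} to guarantee that the space-bounded search decides the model-checking problem.
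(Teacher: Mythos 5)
Your proposal is correct and follows essentially the same route as the paper: invoke the termination and correctness lemmas for Algorithm~\ref{alg:rald-strategy}, then bound the stack by a single root-to-leaf branch whose depth is the sum of the nested first-resource bounds for the minimally endowed agents (with $\downarrow$ modalities reusing, not adding to, that budget), which is polynomial when bounds are written in unary. The only superfluous element is the closing appeal to Savitch-style reasoning --- the deterministic depth-first search with one branch on the stack already gives the PSPACE bound directly, as in the \atlrd case.
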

\begin{proof}
From the two lemmas above it follows that Algorithm \ref{alg:rald-strategy} is
a terminating and correct model-checking algorithm for RAL$^\spec$. The space
it is using on the stack is polynomial in the size of the formula (it is the
sum of nested resource bounds on the first resource for the minimally endowed 
agents). After at most $O(k)$ steps, where $k$ is the maximal value of the first resource bound in $\phi$, the endowment becomes negative for one
of the agents, and the algorithm terminates.
\end{proof}

\section{Conclusion}

In this paper we studied resource logics over models with a diminishing
resource.  We gave new and simple 
model-checking algorithms for the versions of \atlr, \atlrir and RAL
with a diminishing resource. We believe that settings where one of
the resources is always consumed are quite common, and our results
may therefore be of practical interest. It was known that the model checking
problem for \atlr is decidable, but our complexity result for \atlrd is new.
Decidability of the model checking problem for RAL follows from a more general
result on bounded models from \cite{Bulling/Farwer:10a}, but no
model checking algorithm was given there. 
The model checking algorithm for RAL$^{\spec}$ is different from the
algorithm for the decidable fragment of RAL presented in \cite{Alechina//:17b} because
it works for the full RAL rather than just for the positive fragment of proponent-restricted RAL in \cite{Alechina//:17b}.

\bibliographystyle{ACM-Reference-Format}  
\bibliography{references}  

\end{document}